\providecommand{\U}[1]{\protect\rule{.1in}{.1in}}
\newtheorem{theorem}{Theorem}[section]
\newtheorem{lemma}[theorem]{Lemma}
\newenvironment{proof}[1][Proof]{\noindent\textbf{#1.} }{\ \rule{0.5em}{0.5em}}
\begin{document}

\title{Bad Communities with High Modularity}
\author{Ath. Kehagias and L. Pitsoulis\\{\small Faculty of Engineering}\\{\small Aristotle Univ. of Thessaloniki}\\\texttt{kehagiat@auth.gr}, \texttt{pitsoulis@auth.gr}}
\date{\today}
\maketitle

\begin{abstract}
In this paper we discuss some problematic aspects of Newman's modularity
function $Q_{N}$. Given a graph $G$, the modularity of $G$ can be written as
$Q_{N}=Q_{f}-Q_{0}$, where $Q_{f}$ is the \emph{intracluster edge fraction} of
$G$ and $Q_{0}$ is the expected intracluster edge fraction of the \emph{null
model}, i.e., a randomly connected graph with same expected degree
distribution as $G$. It follows that the maximization of $Q_{N}$ must
accomodate two factors pulling in opposite directions:$\ Q_{f}$ favors a small
number of clusters and $Q_{0}$ favors many balanced (i.e., with approximately
equal degrees) clusters. In certain cases the $Q_{0}$ term can cause
\emph{over}estimation of the true cluster number; this is the opposite of the
well-known \emph{under}estimation effect caused by the \textquotedblleft%
\emph{resolution limit}\textquotedblright\ of modularity. We illustrate the
overestimation effect by constructing families of graphs with a
\textquotedblleft natural\textquotedblright\ community structure which,
however, does not maximize modularity. In fact, we prove that we can always
find a graph $G$ with a \textquotedblleft natural clustering\textquotedblright%
\ $\mathbf{V}\ $ of $G$ and another, balanced clustering $\mathbf{U}$ of $G$
such that (i)\ the pair $\left(  G,\mathbf{U}\right)  $ has higher modularity
than $\left(  G,\mathbf{V}\right)  $ and (ii)$\ \mathbf{V}$ and $\mathbf{U}$
are arbitrarily different.

\end{abstract}


\section{Introduction\label{sec01}}

This paper describes some problems which may arise in using \emph{Newman's
modularity function} $Q_{N}$ for \emph{community detection}. Modularity is one
of the most popular quality functions in the community detection literature.
It is not only used to evaluate the community structure of a graph, but also
to perform community detection by \emph{modularity maximization}. However, it
is well known that modularity maximization can, in certain cases, yield the
\textquotedblleft wrong\textquotedblright\ community decomposition. Previous
work on this aspect has focused on the \emph{modularity resolution limit},
which causes \emph{under}estimation of the true number of communities. In this
paper we focus on the opposite effect, in other words we show that, in certain
cases, modularity maximization can \emph{over}estimate the number of communities.

In Section \ref{sec02} we present our nomenclature and notation; let us stress
from the beginning that we will use \textquotedblleft\emph{cluster}%
\textquotedblright\ as a synonym of \textquotedblleft
community\textquotedblright\ and \textquotedblleft\emph{clustering}%
\textquotedblright\ to denote both a partition of the nodes of a graph and the
activity of creating such a partition.

In Section \ref{sec03} we present an interpretation of $Q_{N}$ which, as far
as we know, has not been discussed previously. It is well known that the
modularity of a graph $G$ can be written in the form $Q_{N}$ $=Q_{f}$ $-Q_{0}%
$, where $Q_{f}$ is the \emph{intracluster edge fraction} of $G$ and $Q_{0}$
is the expected intracluster edge fraction of the \emph{null model}, i.e., a
graph $G^{\prime}$ which has the same expected degree distribution as $G$ but
randomly distributed edges. As explained in Section \ref{sec0302},
maximization of $Q_{f}$ favors clusterings with a small number of clusters and
few edges across clusters. On the other hand, as explained in Section
\ref{sec0303}, minimization of $Q_{0}$ favors clusterings with a large number
of clusters and each cluster having approximately equal degree (we call these
\textquotedblleft balanced clusterings\textquotedblright). \emph{Cluster
number selection} is performed by balancing these two opposite effects in the
maximization of $Q_{N}$.\ 

In Section \ref{sec0401} we exploit the behavior of $Q_{0}$ and construct
examples in which modularity maximization yields arbitrarily inaccurate
clusterings. More specifically, we construct a class of graphs $G_{K,N_{1}%
,N_{2}K}$ (where $K$, $N_{1}$, $N_{2}$ are parameters of the graph)\ with the
following properties.

\begin{enumerate}
\item Each graph $G_{K,N_{1},N_{2}}$ has a \textquotedblleft
natural\textquotedblright\ clustering $\mathbf{V}_{K,N_{1},N_{2}}$ (which,
however, does not maximize modularity).

\item We can find graphs $G_{K,N_{1},N_{2}}$ and clusterings $\mathbf{U}%
_{K,N_{1},N_{2},J}$ such that, by appropriate selection of $K,N_{1},N_{2}$ and
$J$, the following hold:

\begin{enumerate}
\item The pair $\left(  G_{K,N_{1},N_{2}},\mathbf{U}_{K,N_{1},N_{2},J}\right)
$ has higher modularity than the pair $\left(  G_{K,N_{1},N_{2}}%
,\mathbf{V}_{K,,N_{1},N_{2}}\right)  $;

\item The modularity of $\left(  G_{K,N_{1},N_{2}},\mathbf{U}_{K,N_{1}%
,N_{2},J}\right)  $ can become (by appropriate selection of $J$) arbitrarily
close to one;

\item the \emph{Jaccard similarity }between clusterings $\mathbf{V}%
_{K,N_{1},N_{2}}$ and $\mathbf{U}_{K,N_{1},N_{2},J}$ can become (by
appropriate selection of $J$) arbitrarily close to zero (hence $\mathbf{V}%
_{K,N_{1},N_{2}}$ and $\mathbf{U}_{K,N_{1},N_{2},J}$ are arbitrarily different
in the Jaccard sense).
\end{enumerate}
\end{enumerate}

\noindent We prove similar results for another class of graphs in Section
\ref{sec0402}.

In Section \ref{sec05} we discuss the implications of our results and
(previously published)\ related work by other authors. It is often claimed
that \ community detection by modularity maximization should be preferred over
other community detection methods because it does not require knowing the
number of clusters in advance. However, in light of our results (as well as
the previously known modularity resolution limit) this claim appears
ill-founded. We conclude the paper with a discussion of alternative cluster
number selection methods.

\section{Preliminaries\label{sec02}}

\begin{enumerate}
\item A \emph{graph }$G$ is a pair $\left(  V,E\right)  $, where $V$ is the
\emph{node set} (we will always assume $V=\left\{  1,2,...,n\right\}  $; hence
the number of nodes is $n=\left\vert V\right\vert $)\ and $E\subseteq\left\{
\left\{  u,v\right\}  :u,v\in V\right\}  $ is the \emph{edge set} (and
$m=\left\vert E\right\vert $ is the number of edges). Hence we will deal with
finite graphs without multiple edges and loops.

\item The \emph{adjacency matrix} of $G$ is an $n\times n$ matrix $A$ with
$A_{u,v}=1$ iff $\left\{  u,v\right\}  \in E$ and 0 otherwise. There is a
one-to-one correspondence between a graph $G$ and its adjacency matrix $A$.

\item A \emph{clustering} of $G=\left(  V,E\right)  $ is a partition
$\mathbf{V}\mathcal{=}\left\{  V_{1},...,V_{K}\right\}  $ of $V$. The
\emph{clusters} are the node sets $V_{1},...,V_{K}$, which satisfy $\cup
_{k=1}^{K}V_{k}=V$ and $\forall k,l:V_{k}\cap V_{l}=\emptyset$. The
\emph{size} of the clustering is $K$, the number of clusters. Given a graph
$G=\left(  V,E\right)  $, we denote by $\mathcal{V}$ the set of all
clusterings of $V$ and by $\mathcal{V}_{K}$ the set of clusterings of size
$K$. Sometimes we call $V_{k}$ a \emph{community}; this is simply a synonym of
\textquotedblleft cluster\textquotedblright.

\item Given a clustering $\mathbf{V}\mathcal{=}\left\{  V_{1},...,V_{K}%
\right\}  $ of the graph $G=\left(  V,E\right)  $, we define the following
edge sets (for $k=1,...,K$):%
\[
E_{k}=\left\{  \left\{  u,v\right\}  :u,v\in V_{k}\text{ and }\left\{
u,v\right\}  \in E\right\}  ,
\]
i.e., $E_{k}$ is the set of edges with both ends being nodes of $V_{k}$. The
edges contained in $\cup_{k=1}^{K}E_{k}$ are the \emph{intracluster }edges;
the remaining edges, i.e., the ones contained in $E-$ $\cup_{k=1}^{K}E_{k}$
are the \emph{extracluster }edges.

\item The \emph{degree function} $\deg\left(  \cdot\right)  :V\rightarrow
\mathbb{Z}$ is defined as follows: for any $v\in V$, $\deg\left(  v\right)
=\left\vert \left\{  \left\{  v,w\right\}  :\left\{  v,w\right\}  \in
E\right\}  \right\vert $ is the number of edges incident on $v$; we also
define, for any $U\subseteq V$, $\deg\left(  U\right)  =\sum_{v\in U}%
\deg\left(  v\right)  $, i.e., the sum of degrees of the nodes contained in
$U$.

\item The \emph{Jaccard similarity index }is defined as follows. Given any two
clusterings $\mathbf{W}_{1}$\textbf{, }$\mathbf{W}_{2}$ define%
\begin{align*}
a_{11}  &  =\text{\textquotedblleft num. of node pairs }\left\{  u,v\right\}
\text{ in same cluster under }\mathbf{W}_{1}\text{ and same cluster under
}\mathbf{W}_{2}\text{\textquotedblright;}\\
a_{10}  &  =\text{\textquotedblleft num. of node pairs }\left\{  u,v\right\}
\text{ in same cluster under }\mathbf{W}_{1}\text{ and different cluster under
}\mathbf{W}_{2}\text{\textquotedblright;}\\
a_{01}  &  =\text{\textquotedblleft num. of node pairs }\left\{  u,v\right\}
\text{ in different cluster under }\mathbf{W}_{1}\text{ and same cluster under
}\mathbf{W}_{2}\text{\textquotedblright.}%
\end{align*}
Then the Jaccard similarity index $S\left(  \mathbf{W}_{1},\mathbf{W}%
_{2}\right)  $ is defined by%
\[
S\left(  \mathbf{W}_{1},\mathbf{W}_{2}\right)  =\frac{a_{11}}{a_{10}%
+a_{01}+a_{11}}.
\]
$S\left(  \mathbf{W}_{1},\mathbf{W}_{2}\right)  $ takes values in $\left[
0,1\right]  $; values close to 1 show that $\mathbf{W}_{1},\mathbf{W}_{2}$ are
very similar; values close to 0 that they are very different.
\end{enumerate}

\section{An Intepretation of Modularity\label{sec03}}

\subsection{Modularity\label{sec0301}}

Given a graph $G=\left(  V,E\right)  $ with adjacency matrix $A$, we denote
the modularity of a clustering $\mathbf{V}$ by $Q_{N}\left(  \mathbf{V}%
,G\right)  $ and, following \cite{newman2004finding}, we define it by
\begin{equation}
Q_{N}\left(  \mathbf{V},G\right)  =\frac{1}{2m}\sum_{i,j\in V}\left(
A_{ij}-\frac{\deg\left(  i\right)  \deg\left(  j\right)  }{2m}\right)
\Delta\left(  i,j\right)  , \label{eq03001}%
\end{equation}
where $\Delta\left(  i,j\right)  $ equals one if $i$ and $j$ belong to the
same cluster and zero otherwise. Our notation emphasizes that $Q_{N}\left(
\mathbf{V},G\right)  $ is a function of both the graph and the clustering.

The motivation for introducing modularity can be seen by the following
interpretation: $Q_{N}\left(  \mathbf{V},G\right)  $\emph{ measures the
fraction of intracluster edges in }$G$\emph{ minus the expected value of the
same quantity in a graph }$G^{\prime}$\emph{ with the same clusters but random
connections between the nodes}\footnote{This is a paraphrase of Newman and
Girvan's description of modularity \cite[Section IV]{newman2004finding}%
.\label{ftnt01}}. $G^{\prime}$ is often called the \emph{null model}. Note
that the intracluster edge fraction of both $G$ and $G^{\prime}$ is computed
with respect to $\mathbf{V}$. A large value of $Q_{N}\left(  \mathbf{V}%
,G\right)  $ indicates that, with respect to $\mathbf{V}$, $G$ is quite
different from the null model; this is taken as evidence of $G$ having
\textquotedblleft\emph{strong community structure}\textquotedblright\ which is
\textquotedblleft well captured\textquotedblright\ by $\mathbf{V}$. Hence
modularity is a \emph{clustering quality function }(CQF) in the sense of
\cite{fortunato2010community}.

Other interpretations of modularity are possible; we will propose one a little
later. But first let us note that, in addition to characterizing a single
$\left(  \mathbf{V},G\right)  $ pair, modularity can be used to \emph{compare}
clusterings: \emph{by definition}, $\mathbf{V}$ is a better clustering of $G$
than $\mathbf{V}^{\prime}$ iff $Q_{N}\left(  \mathbf{V},G\right)
>Q_{N}\left(  \mathbf{V}^{\prime},G\right)  $\textbf{. }Taking this one step
further, $\mathbf{V}^{\ast}=\arg\max_{\mathbf{V}}Q_{N}\left(  \mathbf{V}%
,G\right)  $ is the \emph{best} clustering of $G$. This has two implications:
first, a large value of $\max_{\mathbf{V}}Q_{N}\left(  \mathbf{V},G\right)  $
indicates that $G$ has strong community structure and second, modularity
maximization can be used to obtain graph clusterings, i.e., perform community
detection; this has been the basis of a large number of community detection algorithms.

While modularity maximization is a very popular method for community
detection, it also has shortcomings which have been widely reported in the
literature. For example, the \emph{modularity resolution limit} has attracted
a lot of attention \cite{fortunato2007resolution,good2010performance}; we will
discuss it in Section \ref{sec0303}. But first let us note what appears to be
a more basic limitation of modularity. As already mentioned, a large
$Q_{N}\left(  \mathbf{V},G\right)  $ value indicates strong community
structure and good clustering; but what \emph{is} a \textquotedblleft large
$Q_{N}\left(  \mathbf{V},G\right)  $ value\textquotedblright? While it is
known \cite{brandes2007finding} that $-\frac{1}{2}\leq Q_{N}\left(
\mathbf{V},G\right)  \leq1$ for every pair $\left(  \mathbf{V},G\right)
$,$\ $ examples appear in the community detection literature
\cite{fortunato2010community} which have strong (intuitively
perceived)\ community structure and yet their maximum modularity is closer to
zero than to one. Graphs of high modularity and weak community structure have
also been reported \cite{bagrow2012communities,fortunato2010community}.

A frequently proposed explanation for the shortcomings of modularity is that
the use of the null model is not well justified \cite{fortunato2010community}.
In Section \ref{sec0303} we will consider an alternative, complementary
explanation. But first we will examine another CQF.

\subsection{Intracluster Edge Fraction\label{sec0302}}

A popular characterization of a graph community is that \textquotedblleft%
\emph{there must be more edges `inside' the community than edges linking
vertices of the community with the rest of the graph}\textquotedblright%
\ \cite[Section III-B.1]{fortunato2010community}. Variations of this principle
have been stated by several authors\footnote{An extreme statement of this idea
appears in \cite{chen2007checking}: \textquotedblleft a community network
$G_{0}=(V,E_{0})\ $[is] a graph $G_{0}$ that is a disjoint union of complete
subgraphs\textquotedblright.}.

A \emph{prima facie} reasonable way to quantify the principle is through the
\emph{intracluster edge fraction}, denoted by $Q_{f}\left(  \mathbf{V}%
,G\right)  $ and defined by%
\begin{equation}
Q_{f}\left(  \mathbf{V},G\right)  =\frac{\sum_{k=1}^{K}\left\vert
E_{k}\right\vert }{m}. \label{eq03003}%
\end{equation}
For every $G$ and $\mathbf{V}$, $Q_{f}\left(  \mathbf{V},G\right)  \in\left[
0,1\right]  $. A high (i.e., close to 1)\ value of $Q_{f}\left(
\mathbf{V},G\right)  $ indicates that the pair $\left(  \mathbf{V},G\right)  $
has many intracluster and few extracluster edges.

Unfortunately, a high $Q_{f}\left(  \mathbf{V},G\right)  $ value does
\emph{not} guarantee either that $G$ has strong community structure or that
$\mathbf{V}$ is a good clustering of $G$. Indeed we can always achieve the
maximum value $Q_{f}\left(  \mathbf{V},G\right)  =1$ by taking $\mathbf{V=}%
\left\{  V\right\}  $ (i.e., the unique clustering of size one)\ but this
tells us nothing about the \textquotedblleft true\textquotedblright\ community
structure of $G$. This observation can be generalized. First define the
following function%
\begin{equation}
F_{G}\left(  K\right)  =\max_{\mathbf{V\in}\mathcal{V}_{K}}Q_{f}\left(
\mathbf{V},G\right)  . \label{eq03004}%
\end{equation}
In words, for a given graph $G$, $F_{G}\left(  K\right)  $ is the maximum
intracluster edge fraction \emph{achieved by clusterings of size }$K$. Now we
can prove the following.

\begin{theorem}
\label{prp0301}For any graph $G=\left(  V,E\right)  $, $F_{G}\left(  K\right)
$ is a nonincreasing function of $K$.
\end{theorem}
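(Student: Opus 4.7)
The plan is to prove monotonicity by a single merge argument: starting from an optimal clustering at size $K$, merge any two of its clusters to produce a $(K-1)$-clustering whose intracluster edge count is at least as large.

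First I would fix $K \geq 2$ and select a clustering $\mathbf{V}^{\ast} = \{V_1^{\ast}, \ldots, V_K^{\ast}\} \in \mathcal{V}_K$ attaining the maximum in the definition of $F_G(K)$; such a maximum exists because $\mathcal{V}_K$ is finite. Let $E_1^{\ast}, \ldots, E_K^{\ast}$ be the corresponding intracluster edge sets.

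Next I would define $\mathbf{V}' = \{V_1^{\ast} \cup V_2^{\ast}, V_3^{\ast}, \ldots, V_K^{\ast}\} \in \mathcal{V}_{K-1}$, obtained by merging any two clusters of $\mathbf{V}^{\ast}$. The key observation is that the intracluster edges of $\mathbf{V}'$ consist of all edges in $E_1^{\ast} \cup E_2^{\ast} \cup \cdots \cup E_K^{\ast}$ together with the (possibly empty) set of edges of $G$ with one endpoint in $V_1^{\ast}$ and one endpoint in $V_2^{\ast}$. Since these edges were extracluster under $\mathbf{V}^{\ast}$ and are intracluster under $\mathbf{V}'$, the total intracluster edge count can only weakly increase. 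Dividing by the (clustering-independent) quantity $m$ gives $Q_f(\mathbf{V}', G) \geq Q_f(\mathbf{V}^{\ast}, G) = F_G(K)$.

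Finally, since $\mathbf{V}' \in \mathcal{V}_{K-1}$, we have $F_G(K-1) \geq Q_f(\mathbf{V}', G) \geq F_G(K)$, which is exactly the nonincreasing property. I do not anticipate any real obstacle here: the argument is a one-line counting observation about how merging affects the edge partition, and the only minor bookkeeping is noting that $F_G(K)$ is well-defined only for $K \leq |V|$ and that the statement is vacuous at the endpoint $K = 1$.
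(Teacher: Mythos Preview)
Your proposal is correct and follows essentially the same approach as the paper: take an optimal size-$K$ clustering, merge two of its parts to obtain a size-$(K-1)$ clustering, and observe that the intracluster edge count can only increase under this merge, so $F_G(K-1)\geq F_G(K)$. The paper merges $V_{K-1}^{(K)}$ and $V_K^{(K)}$ instead of $V_1^{\ast}$ and $V_2^{\ast}$ and also records the boundary value $F_G(1)=1$, but these are cosmetic differences.
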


\begin{proof}
There exists a single clustering of size one, namely\ $\mathbf{V}^{\left(
1\right)  }\mathbf{=}\left\{  V\right\}  $. Denote the set of intracluster
edges by $E_{1}^{\left(  1\right)  }$; obviously $E_{1}^{\left(  1\right)
}=E$ (i.e., \emph{all} edges are intracluster). Hence $F_{G}\left(  1\right)
=\frac{\left\vert E_{1}^{\left(  1\right)  }\right\vert }{\left\vert
E\right\vert }=1$.

Let $\mathbf{V}^{\left(  K\right)  }\mathbf{=}\left\{  V_{1}^{\left(
K\right)  },V_{2}^{\left(  K\right)  },...,V_{K}^{\left(  K\right)  }\right\}
$ be the optimal clustering of size $K$; the intracluster edge sets are
$E_{1}^{\left(  K\right)  }$, ..., $E_{K}^{\left(  K\right)  }$. Create a
clustering $\mathbf{V}^{\prime}$ of size $K-1$ by merging $V_{K-1}^{\left(
K\right)  }$ and $V_{K}^{\left(  K\right)  }$. In other words
\[
\mathbf{V}^{\prime}=\left\{  V_{1}^{\left(  K\right)  },V_{2}^{\left(
K\right)  },...,V_{K-2}^{\left(  K\right)  },V_{K-1}^{\left(  K\right)  }\cup
V_{K}^{\left(  K\right)  }\right\}  .
\]
Under $\mathbf{V}^{\prime}$ the intracluster edges are
\[
E_{1}^{\prime}=E_{1}^{\left(  K\right)  },\quad...,\quad E_{K-2}^{\prime
}=E_{K-2}^{\left(  K\right)  },\quad E_{k-1}^{\prime}.
\]
We have $E_{K-1}^{\left(  K\right)  }\cup E_{K}^{\left(  K\right)  }\subseteq
E_{K-1}^{\prime}$\ and $\left\vert E_{K-1}^{\left(  K\right)  }\right\vert
+\left\vert E_{K}^{\left(  K\right)  }\right\vert \leq\left\vert
E_{K-1}^{\prime}\right\vert $. Hence
\[
F_{G}\left(  K\right)  =Q_{f}\left(  \mathbf{V}^{\left(  K\right)  },G\right)
=\frac{\sum_{k=1}^{K}\left\vert E_{k}^{\left(  K\right)  }\right\vert
}{\left\vert E\right\vert }\leq\frac{\sum_{k=1}^{K-2}\left\vert E_{k}^{\left(
K\right)  }\right\vert }{\left\vert E\right\vert }+\frac{\left\vert
E_{K-1}^{\prime}\right\vert }{\left\vert E\right\vert }=Q_{f}\left(
\mathbf{V}^{\prime},G\right)  .
\]
But
\[
Q_{f}\left(  \mathbf{V}^{\prime},G\right)  \leq\max_{\mathbf{V\in}%
\mathcal{V}_{K}}Q_{f}\left(  \mathbf{V},G\right)  =F_{G}\left(  K-1\right)  .
\]
It follows that $0\leq F_{G}\left(  n\right)  \leq...\leq F_{G}\left(
2\right)  \leq F_{G}\left(  1\right)  =1\ $and the proof is complete.
\end{proof}

Hence, for any $G$, $Q_{f}\left(  \mathbf{V},G\right)  $ is maximized at $K=1$
and this gives us no information about the actual community structure of $G$.
In other words, Theorem \ref{prp0301} implies that $Q_{f}$ maximization cannot
determine the optimal number of clusters. On the other hand, if $K$ is given
in advance (as a parameter) then $\mathbf{V}^{\left(  K\right)  }=\arg
\max_{\mathbf{V\in}\mathcal{V}_{K}}Q_{f}\left(  \mathbf{V},G\right)  $ is a
reasonable candidate \emph{for the best clustering of size }$K$. This has
sometimes been phrased as a criticism of community detection by $Q_{f}$
maximization. For instance, in \cite{fortunato2010community} is stated that
\textquotedblleft\emph{Algorithms for graph partitioning are not good for
community detection, because it is necessary to provide as input the number of
groups}\textquotedblright. However, this criticism is valid only to the extent
that other algorithms exist which \emph{can} obtain the true number of groups
(clusters). For example, an alleged advantage of modularity is that its
maximization yields the correct number of clusters; let us now discuss this claim.

\subsection{Modularity as Augmented Intracluster Edge Fraction\label{sec0303}}

The claim that modularity maximization can determine the true number of
clusters has been put in doubt by the discovery of the modularity
\emph{resolution limit}. As explained in
\cite{fortunato2007resolution,good2010performance} and several other papers,
there exist graphs $G$ for which the clustering obtained by maximizing
modularity has fewer clusters than the \textquotedblleft intuitively
correct\textquotedblright\ clustering of $G$. In other words, modularity
maximization can \emph{under}estimate the number of clusters. We will now
argue that modularity maximization can also \emph{over}estimate the number of
clusters. Our argument is intuitive, but it will form the basis of some
precise results presented in Section \ref{sec04}.

Modularity can be computed by the formula (which is known to be equivalent to
(\ref{eq03001}) ):
\begin{equation}
Q_{N}\left(  \mathbf{V},G\right)  =\sum_{k=1}^{K}\frac{\left\vert
E_{k}\right\vert }{m}-\sum_{k=1}^{K}\left(  \frac{\deg\left(  V_{k}\right)
}{2m}\right)  ^{2}. \label{eq03002}%
\end{equation}
Defining
\begin{equation}
Q_{0}\left(  \mathbf{V},G\right)  =\sum_{k=1}^{K}\left(  \frac{\deg\left(
V_{k}\right)  }{2m}\right)  ^{2} \label{eq03006}%
\end{equation}
we can rewrite (\ref{eq03002})\ as
\begin{equation}
Q_{N}\left(  \mathbf{V},G\right)  =Q_{f}\left(  \mathbf{V},G\right)
-Q_{0}\left(  \mathbf{V},G\right)  . \label{eq03005}%
\end{equation}
Hence Newman's modularity is the difference of $Q_{f}\left(  \mathbf{V}%
,G\right)  $ and the auxiliary function $Q_{0}\left(  \mathbf{V},G\right)  $.
As already mentioned, the introduction of $Q_{0}\left(  \mathbf{V},G\right)  $
is usually motivated by appeal to the null model \cite{newman2004finding}; we
will now present an alternative, complementary view.

Suppose momentarily that $K$ is given and we want to \emph{minimize}
$Q_{0}\left(  \mathbf{V},G\right)  $ with respect to $\mathbf{V}=\left\{
V_{1},...,V_{K}\right\}  $. For simplicity of notation, define $p_{k}%
=\frac{\deg\left(  V_{k}\right)  }{2m}$; then
\[
Q_{0}\left(  \mathbf{V},G\right)  =\sum_{k=1}^{K}\left(  \frac{\deg\left(
V_{k}\right)  }{2m}\right)  ^{2}=\sum_{k=1}^{K}p_{k}^{2}%
\]
and we also have
\[
\sum_{k=1}^{K}p_{k}=\sum_{k=1}^{K}\frac{\deg\left(  V_{k}\right)  }{2m}=1.
\]
Hence we want to solve the following problem:%
\begin{equation}
\text{given }K\text{, minimize }\sum_{k=1}^{K}p_{k}^{2}\text{\qquad\qquad
subject to}\text{: }0\leq p_{k}\leq1\text{ and }\sum_{k=1}^{K}p_{k}=1.
\label{eq03007}%
\end{equation}
Of course there are additional constraints on the $p_{k}$'s:\ each of them
must be obtained by summing the degrees of $V_{k}$, which is a set of nodes of
the given graph $G$. However, assume for the time being that the $p_{k}$'s are
continuously valued and must only satisfy the constraints of (\ref{eq03007})
(these assumptions will be removed a little later). Under these assumptions,
the solution to (\ref{eq03007}) is $p_{k}=\frac{1}{K}$ for all $k$; the
minimum thus achieved is $\frac{1}{K}$.

Next consider the problem:%
\begin{equation}
\text{minimize }\sum_{k=1}^{K}p_{k}^{2}\text{\qquad\qquad subject to}\text{:
}K\in\left\{  1,...,n\right\}  \text{, }0\leq p_{k}\leq1\text{ and }\sum
_{k=1}^{K}p_{k}=1. \label{eq03008}%
\end{equation}
We can solve (\ref{eq03008}) by first solving (\ref{eq03007}) separately for
each $K\in\left\{  1,...,n\right\}  $ and then looking for the overall
minimum; we see that this is $\frac{1}{n}$ and is achieved at $K=n$ and
$p_{k}=\frac{1}{n}$ for all $k$.

Going back to the minimization of $Q_{0}\left(  \mathbf{V},G\right)  $ we note
that, in general, the overall minimum $\sum_{k=1}^{K}p_{k}^{2}=\frac{1}{n}$
will only be achieved under very special circumstances. Namely, if all nodes
of $\mathbf{G}$ have equal degree, then
\[
\min_{\mathbf{V}\in\mathcal{V}}Q_{0}\left(  \mathbf{V},G\right)  =Q_{0}\left(
\mathbf{V}^{\ast},G\right)  =\frac{1}{n}%
\]
where $\mathbf{V}^{\ast}=\left\{  V_{1},...,V_{n}\right\}  $ and
$V_{i}=\left\{  i\right\}  $ for $i\in\left\{  1,...,n\right\}  $. But even
when the nodes of $G$ do not have equal degrees, it seems intuitively obvious
that small values of $Q_{0}\left(  \mathbf{V},G\right)  $ are achieved by
clusterings $\mathbf{V}$ which have many clusters (large $K$) and distribute
nodes between clusters so that $p_{k}=\frac{\deg\left(  V_{k}\right)  }{2m}$
is approximately the same for all $k\in\left\{  1,...,K\right\}  $. In
\ Section \ref{sec04} we will see precise examples which justify this intuition.

Let us now apply the above observations to modularity maximization. Since
(i)$\ Q_{N}=Q_{f}-Q_{0}$, (ii)$\ Q_{f}$ achieves its maximum at $K=1$ and
(iii)$\ $we expect $Q_{0}$ to achieve its minimum at or near $K=n$, we
conclude that the following factors will influence the outcome of modularity
maximization:\ the $Q_{f}$ term pulls $K$ towards small values and the $Q_{0}$
towards large ones; in addition the $Q_{f}$ term favors clusterings which
correspond to the \textquotedblleft natural\textquotedblright\ community
structure of $G$ (i.e., there exist few extracluster edges) while the $Q_{0}$
favors \textquotedblleft balanced\textquotedblright\ clusterings (i.e., each
cluster has more or less the same degree). The final outcome depends on (among
other factors) the relative magnitudes of $Q_{f}$ and $Q_{0}$.

These observations agree with previously published remarks, e.g., that
\textquotedblleft the existing modularity optimization method does not perform
well in the presence of unbalanced community structures\textquotedblright%
\ \cite{zhang2012community} and \textquotedblleft for modularity's null model
graphs, the modularity maximum corresponds to an equipartition of the
graph\textquotedblright\ \cite{fortunato2010community}.\ However, the above
works (and many other)\ concentrate on examples in which modularity
maximization underestimates the cluster number, while our analysis suggests an
overestimation effect. Since, to the best of our knowledge, overestimation
examples do not appear in the literature, we will present some in Section
\ref{sec04}.

Let us note, in concluding this section, that one method used to address the
modularity resolution limit is to introduce a modified modularity function.
This function is often written in the form%
\[
Q\left(  \mathbf{V},G;\gamma\right)  =Q_{f}\left(  \mathbf{V},G\right)
-\gamma Q_{0}\left(  \mathbf{V},G\right)
\]
\ where $\gamma$ is a \textquotedblleft tuning parameter\textquotedblright%
\ (see
\cite{alvarez2010weakly,krings2011upper,reichardt2006statistical,traag2011narrow,xiang2011limitation}
and also \cite{kumpula2007limited,li2008quantitative}). With $\gamma=1$,
$Q\left(  \mathbf{V},G;1\right)  =Q_{N}\left(  \mathbf{V},G\right)  $, the
original Newman's modularity. If this underestimates (resp.
overestimates)\ the \textquotedblleft true\textquotedblright\ number of
clusters, formation of more (resp. fewer)\ clusters can be encouraged by
increasing (resp. decreasing)$\ \gamma$ and hence the influence of the
$Q_{0}\left(  \mathbf{V},G\right)  $ term on the maximization problem.
However, it seems that no \textquotedblleft universally
correct\textquotedblright\ value of $\gamma$ exists; in other words, the
resolution limit can occur for any $\gamma$ value
\cite{traag2011narrow,xiang2011limitation}.

\section{Bad Clusterings with High Modularity\label{sec04}}

In this section we construct graphs admitting (i)\ a \textquotedblleft%
\emph{natural}\textquotedblright\ clustering and (ii)\ a sequence of
\textquotedblleft\emph{arbitrarily bad}\textquotedblright\ clusterings which
achieve higher modularity than the natural one. In fact, as we will see, the
arbitrarily bad clusterings can achieve modularity arbitrarily close to one
and they can be \textquotedblleft\emph{arbitrarily different}%
\textquotedblright\ from the natural clustering (we will presently explain
precisely what we mean by the terms \textquotedblleft
natural\textquotedblright,\ \textquotedblleft arbitrarily
bad\textquotedblright\ and \textquotedblleft arbitrarily
different\textquotedblright). These results indicate that, at least in certain
cases, modularity is not a good CQF.

\subsection{First Example\label{sec0401}}

To establish the abovementioned results, we will construct a family of graphs
$G_{K,N_{1},N_{2}}$ (where $K,N_{1},N_{2}$ are parameters)\ such that the
graph $G_{K,N_{1},N_{2}}$ has an easily recognized \textquotedblleft
natural\textquotedblright\ clustering $\mathbf{V}_{K,N_{1},N_{2}}$ (for every
$K,N_{1},N_{2}$).

We define $G_{K,N_{1},N_{2}}$ as follows. First, for any $N_{1}$, $N_{2}$ we
define the disconnected graph $G_{N_{1},N_{2}}$ to be the union of a path of
$N_{1}$ nodes and a path of $N_{2}$ nodes; second, we let the disconnected
graph $G_{K,N_{1},N_{2}}$ be the union of $K$ copies of $G_{N_{1},N_{2}}$. The
construction is illustrated in Figure \ref{fig0401}. \begin{figure}[h]
\centering\scalebox{0.6}{\includegraphics{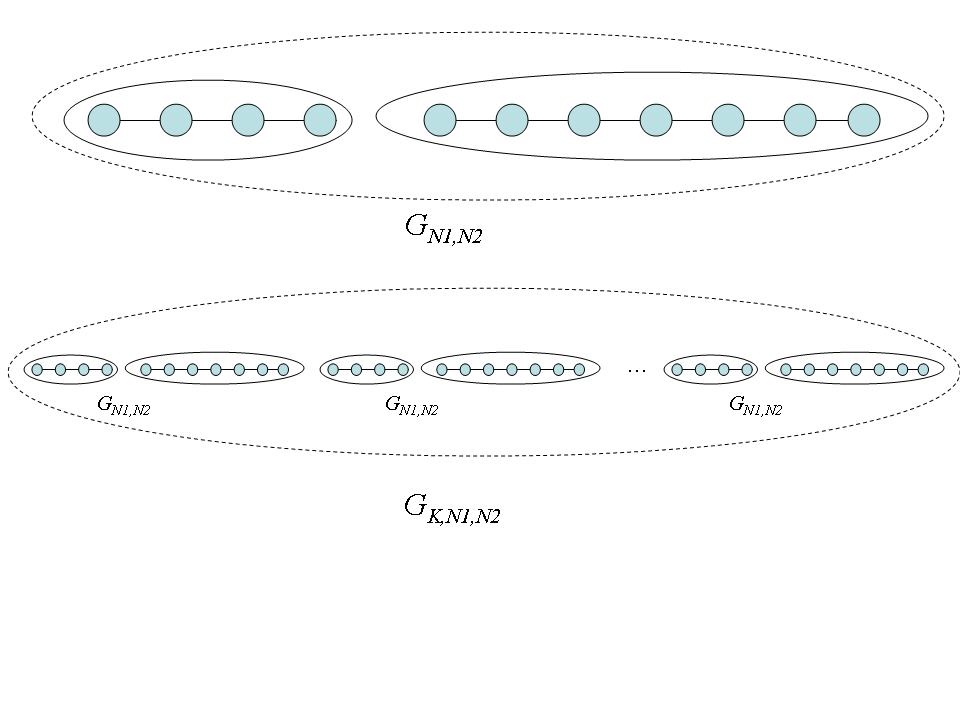}}\caption[Graph Family
A]{Graph Family $G_{K,N_{1},N_{2}}.$}%
\label{fig0401}%
\end{figure}

We claim that the \emph{natural }clustering of $G_{K,N_{1},N_{2}}$ is
$\mathbf{V}_{K,N_{1},N_{2}}$ = \{$V_{K,N_{1},N_{2},1}$, $V_{K,N_{1},N_{2},2}$,
$...$, $V_{K,N_{1},N_{2},2K}$\}, where $V_{K,N_{1},N_{2},k}$ is the node set
of the $k$-th connected component of $G$ (with $k\in\left\{
1,2,...,2K\right\}  $, see Figure \ref{fig0401}). At the risk of belaboring
the obvious, we note that, if $u\in V_{K,N_{1},N_{2},i}$ and $v\in
V_{K,N_{1},N_{2},j}$ and $i\neq j$, then there exists no path connecting $u$
and $v$; hence they should never be put in the same cluster. So the biggest
possible clusters are the $V_{K,N_{1},N_{2},i}$'s. On the other hand, there is
no justification for splitting some $V_{K,N_{1},N_{2},i}$ at any particular
edge, since all edges have the same connectivity pattern, i.e., the $i$-th
edge connects nodes $i$ and $i+1$. Hence $\mathbf{V}_{K,N_{1},N_{2}}$ is the
\textquotedblleft intuitively best\textquotedblright\ (i.e., the
\textquotedblleft natural\textquotedblright) clustering of $G_{K,N_{1},N_{2}}$.

\begin{lemma}
\label{prp0401}For every $K,N_{1},N_{2}\in\mathbb{N}$ with $N_{1},N_{2}%
\geq3\mathbb{\ }$and $J\leq n=K\left(  N_{1}+N_{2}\right)  $ we have%
\begin{equation}
Q_{N}\left(  \mathbf{V}_{K,N_{1},N_{2}},G_{K,N_{1},N_{2}}\right)
=1-\frac{\left(  N_{1}-1\right)  ^{2}+\left(  N_{2}-1\right)  ^{2}}{K\left(
N_{1}+N_{2}-2\right)  ^{2}}. \label{eq04002}%
\end{equation}

\end{lemma}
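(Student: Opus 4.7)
The plan is to exploit the decomposition $Q_N = Q_f - Q_0$ from (\ref{eq03005}) and compute the two terms separately for the natural clustering, in which each cluster coincides with a connected component of $G_{K,N_1,N_2}$.

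First I would record the basic counts. Each copy of $G_{N_1,N_2}$ is the disjoint union of a path on $N_1$ nodes (with $N_1-1$ edges, hence degree sum $2(N_1-1)$) and a path on $N_2$ nodes (with $N_2-1$ edges, hence degree sum $2(N_2-1)$). Summing over the $K$ copies gives $m = K(N_1+N_2-2)$. Next, since the clusters of $\mathbf{V}_{K,N_1,N_2}$ are precisely the $2K$ connected components, no edge crosses a cluster boundary, so by the definition (\ref{eq03003}) we immediately get $Q_f(\mathbf{V}_{K,N_1,N_2},G_{K,N_1,N_2}) = 1$.

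The only real computation is $Q_0$. The $2K$ cluster-degree values $\deg(V_{K,N_1,N_2,k})$ come in two kinds: $K$ of them equal $2(N_1-1)$ and $K$ of them equal $2(N_2-1)$. Substituting into (\ref{eq03006}) with $2m = 2K(N_1+N_2-2)$, the factors of $2$ cancel and one factor of $K$ cancels against $K^2$, giving
$$Q_0(\mathbf{V}_{K,N_1,N_2},G_{K,N_1,N_2}) = \frac{(N_1-1)^2 + (N_2-1)^2}{K(N_1+N_2-2)^2},$$
and then $Q_N = Q_f - Q_0$ is exactly (\ref{eq04002}).

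There is essentially no obstacle here; the lemma is a bookkeeping exercise, and the only care required is with the normalization constants (the $2m$ in the denominator and the multiplicities from the $K$ copies). I note in passing that the hypotheses $N_1,N_2 \geq 3$ and $J \leq n$ are not used in this particular calculation; they presumably enter in the subsequent analysis of the alternative clustering $\mathbf{U}_{K,N_1,N_2,J}$, where one needs paths long enough to admit nontrivial re-partitions.
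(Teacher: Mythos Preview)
Your proof is correct and follows essentially the same approach as the paper: both compute $Q_f=1$ from the absence of extracluster edges, determine $m=K(N_1+N_2-2)$ and the cluster degrees $2(N_1-1),\,2(N_2-1)$ from the path structure, and substitute into the $Q_0$ term to obtain the stated formula. Your observation that the hypotheses $N_1,N_2\geq 3$ and $J\leq n$ play no role in this particular computation is also accurate.
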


\begin{proof}
We fix $K,N_{1},N_{2}$ and, for brevity, we write $G$ for $G_{K,N_{1},N_{2}}$
and $\mathbf{V}$ for $\mathbf{V}_{K,N_{1},N_{2}}$. We have%
\[
Q_{N}\left(  \mathbf{V},G\right)  =\frac{\sum_{k=1}^{2K}\left\vert
E_{k}\right\vert }{m}-\frac{\sum_{k=1}^{2K}\left(  \deg\left(  V_{k}\right)
\right)  ^{2}}{\left(  2m\right)  ^{2}}.
\]
Under $\mathbf{V}$, $G$ has no extracluster edges hence we have%
\begin{equation}
\frac{\sum_{k=1}^{2K}\left\vert E_{k}\right\vert }{m}=1. \label{eq04003}%
\end{equation}
We can separate $\mathbf{V}$ into two subsets of clusters:\ $\mathbf{V}%
^{\prime}=\left\{  V_{1},V_{3},...,V_{2K-1}\right\}  $ contains the the
clusters with $N_{1}$ nodes and $\mathbf{V}^{\prime\prime}=\left\{
V_{2},V_{4},...,V_{2K}\right\}  $ contains the the clusters with $N_{2}$
nodes. Each $V_{k}\in\mathbf{V}^{\prime}$ has $N_{1}-2$ \textquotedblleft
inner nodes\textquotedblright\ of degree 2 and two \textquotedblleft border
nodes\textquotedblright\ of degree 1; similarly, each $V_{k}\in\mathbf{V}%
^{\prime\prime}$ has $N_{2}-2$ inner nodes and $2$ border nodes. Hence%
\begin{align*}
\forall &  :V_{k}\in\mathbf{V}^{\prime}:\deg\left(  V_{k}\right)  =2\left(
N_{1}-2\right)  +2=2\left(  N_{1}-1\right) \\
\forall &  :V_{k}\in\mathbf{V}^{\prime\prime}:\deg\left(  V_{k}\right)
=2\left(  N_{2}-2\right)  +2=2\left(  N_{2}-1\right)
\end{align*}
The total number of edges is
\[
m=\frac{\sum_{V_{k}\in\mathbf{V}}\deg\left(  V_{k}\right)  }{2}=\frac
{\sum_{V_{k}\in\mathbf{V}^{\prime}}\deg\left(  V_{k}\right)  +\sum_{V_{k}%
\in\mathbf{V}^{\prime\prime}}\deg\left(  V_{k}\right)  }{2}=K\left(
N_{1}+N_{2}-2\right)  .
\]
Also,
\begin{align}
\frac{\sum_{k=1}^{2K}\left(  \deg\left(  V_{k}\right)  \right)  ^{2}}{\left(
2m\right)  ^{2}}  &  =\frac{\sum_{V_{k}\in\mathbf{V}^{\prime}}\left(  2\left(
N_{1}-1\right)  \right)  ^{2}}{\left(  2K\left(  N_{1}+N_{2}-2\right)
\right)  ^{2}}+\frac{\sum_{V_{k}\in\mathbf{V}^{\prime\prime}}\left(  2\left(
N_{2}-1\right)  \right)  ^{2}}{\left(  2K\left(  N_{1}+N_{2}-2\right)
\right)  ^{2}}\nonumber\\
&  =\frac{K\left(  N_{1}-1\right)  ^{2}+K\cdot\left(  N_{2}-1\right)  ^{2}%
}{K^{2}\left(  N_{1}+N_{2}-2\right)  ^{2}}=\frac{\left(  N_{1}-1\right)
^{2}+\left(  N_{2}-1\right)  ^{2}}{K\left(  N_{1}+N_{2}-2\right)  ^{2}}.
\label{eq04004}%
\end{align}
Combining (\ref{eq04003}) and (\ref{eq04004})\ we get (\ref{eq04002}).
\end{proof}

Let us now introduce the \textquotedblleft bad clusterings\textquotedblright.
For every triple $\left(  K,N_{1},N_{2}\right)  $, we define a \emph{sequence}
$\left\{  \mathbf{U}_{K,N_{1},N_{2},J}\right\}  _{J=1}^{n}$ of clusterings of
$G_{K,N_{1},N_{2}}$. For a fixed $J$, let $L=\left\lfloor \frac{n}%
{J}\right\rfloor $; writing for brevity $\mathbf{U}_{J}$ in place of
$\mathbf{U}_{K,N_{1},N_{2},J}$, we let $\mathbf{U}_{J}=\left\{  U_{1}%
,...,U_{J},U_{J+1}\right\}  $ consist of the following $J+1$ clusters:%
\[
U_{1}=\left\{  1,...,L\right\}  \text{, }U_{2}=\left\{  L+1,...,2L\right\}
\text{, ... , }U_{J}=\left\{  \left(  J-1\right)  L+1,...,JL\right\}  \text{,
}U_{J+1}=\left\{  JL+1,...,n\right\}  ;
\]
if $n=JL$ then $U_{J+1}=\emptyset$. In other words, $\mathbf{U}_{J}$ contains
$J$ clusters each containing the same number of nodes (namely $L=\left\lfloor
\frac{n}{J}\right\rfloor $) and perhaps an additional cluster (with fewer than
$L$ nodes). Obviously $\mathbf{U}_{J}$ is a \textquotedblleft well
balanced\textquotedblright\ clustering.

\begin{lemma}
\label{prp0402}For every $K,N_{1},N_{2},J\in\mathbb{N}$ with $N_{1},N_{2}%
\geq3\mathbb{\ }$we have%
\begin{equation}
Q_{N}\left(  \mathbf{U}_{K,N_{1},N_{2},J},G_{K,N_{1},N_{2}}\right)
\geq1-\frac{1}{K\left(  N_{1}+N_{2}-2\right)  }J-\frac{2\left(  N_{1}%
+N_{2}\right)  ^{2}}{\left(  N_{1}+N_{2}-2\right)  ^{2}}J^{-1}.
\label{eq04011}%
\end{equation}

\end{lemma}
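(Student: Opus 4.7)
The plan is to bound $Q_f(\mathbf{U}_J,G)$ from below and $Q_0(\mathbf{U}_J,G)$ from above separately, then combine via $Q_N = Q_f - Q_0$. I will write $G = G_{K,N_1,N_2}$ and $\mathbf{U}_J = \mathbf{U}_{K,N_1,N_2,J}$, and use $n = K(N_1+N_2)$, $m = K(N_1+N_2-2)$ (the latter computed as in Lemma \ref{prp0401}), and $L = \lfloor n/J \rfloor$.

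For $Q_f$: the key observation is that every edge of $G$ joins two consecutively labelled nodes that lie in the same path component. The clustering $\mathbf{U}_J$ cuts the label set $\{1,\dots,n\}$ at the $J$ positions $L, 2L, \dots, JL$, so each cut destroys at most one edge (and destroys exactly zero if the cut happens to coincide with the boundary between two path components). Hence the number of extracluster edges is at most $J$, giving
\[
Q_f(\mathbf{U}_J,G) \geq 1 - \frac{J}{m} = 1 - \frac{J}{K(N_1+N_2-2)}.
\]

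For $Q_0$: each cluster $U_k$ contains at most $L$ nodes, and every node of $G$ has degree at most $2$, so $\deg(U_k) \leq 2L$ for every $k \in \{1,\dots,J+1\}$. Thus
\[
Q_0(\mathbf{U}_J,G) = \sum_{k=1}^{J+1}\left(\frac{\deg(U_k)}{2m}\right)^2 \leq (J+1)\frac{L^2}{m^2}.
\]
Using $L \leq n/J = K(N_1+N_2)/J$ and $m = K(N_1+N_2-2)$, this becomes
\[
Q_0(\mathbf{U}_J,G) \leq \frac{(J+1)(N_1+N_2)^2}{J^2(N_1+N_2-2)^2} \leq \frac{2(N_1+N_2)^2}{J(N_1+N_2-2)^2},
\]
where the final step uses $(J+1)/J^2 \leq 2/J$ for $J \geq 1$. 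Subtracting this upper bound on $Q_0$ from the lower bound on $Q_f$ yields exactly \eqref{eq04011}.

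I do not expect a serious obstacle here: the argument is mostly bookkeeping. The only subtlety is the $Q_f$ estimate, which relies on the very specific fact that the node labels in $G_{K,N_1,N_2}$ have been chosen so that edges are exclusively between consecutive labels within a path; this is what forces a contiguous-block partition like $\mathbf{U}_J$ to cut at most one edge per cut. Once this is noted, the remaining inequalities are elementary and the constant $2$ in front of $(N_1+N_2)^2 J^{-1}$ comes from the crude but sufficient estimate $(J+1)/J^2 \leq 2/J$.
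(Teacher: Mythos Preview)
Your proof is correct and follows essentially the same route as the paper's: bound the number of extracluster edges by $J$ to get $Q_f \geq 1 - J/m$, bound each $\deg(U_k)$ by $2n/J$ to get $Q_0 \leq (J+1)(n/J)^2/m^2$, and finish with $(J+1)/J^2 \leq 2/J$. Your justification for the $Q_f$ bound (that edges of $G_{K,N_1,N_2}$ join only consecutively labelled nodes, so each of the $J$ cut points severs at most one edge) is in fact more explicit than the paper's, which simply asserts that ``a little thought shows'' there are at most $J$ extracluster edges.
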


\begin{proof}
We write $G$ for $G_{K,N_{1},N_{2}}$ and $\mathbf{U}_{J}$ for $\mathbf{U}%
_{K,N_{1},N_{2},J}$. We have%
\[
Q_{N}\left(  \mathbf{U}_{J},G\right)  =\frac{\sum_{k=1}^{J+1}\left\vert
E_{k}\right\vert }{m}-\frac{\sum_{k=1}^{J+1}\left(  \deg\left(  U_{k}\right)
\right)  ^{2}}{\left(  2m\right)  ^{2}}.
\]
Consider first $\frac{\sum_{k=1}^{J+1}\left\vert E_{k}\right\vert }{m}$. A
little thought shows that $\mathbf{U}_{J}$ has at most $J+1$ clusters and $J$
extracluster edges. Hence%
\begin{equation}
\forall J:\frac{\sum_{k=1}^{J+1}\left\vert E_{k}\right\vert }{m}\geq\frac
{m-J}{m}=1-\frac{J}{m}=1-\frac{1}{K\left(  N_{1}+N_{2}-2\right)  }J.
\label{eq04012}%
\end{equation}
Consider now $\frac{\sum_{k=1}^{J+1}\left(  \deg\left(  U_{k}\right)  \right)
^{2}}{\left(  2m\right)  ^{2}}$. Each $U_{k}$ has no more than $\frac{n}%
{J}=\frac{K\left(  N_{1}+N_{2}\right)  }{J}$ nodes and each node has degree at
most 2. Hence%
\begin{equation}
\forall J:\frac{\sum_{k=1}^{J+1}\left(  \deg\left(  U_{k}\right)  \right)
^{2}}{\left(  2m\right)  ^{2}}\leq\frac{\left(  J+1\right)  \cdot\left(
2\frac{K\left(  N_{1}+N_{2}\right)  }{J}\right)  ^{2}}{4K^{2}\left(
N_{1}+N_{2}-2\right)  ^{2}}\leq\frac{2\left(  N_{1}+N_{2}\right)  ^{2}%
}{\left(  N_{1}+N_{2}-2\right)  ^{2}}J^{-1} \label{eq04013}%
\end{equation}
(since $\forall J\in\mathbb{N}:\frac{J+1}{J}\leq2$). Combining (\ref{eq04012}%
)\ and (\ref{eq04013})\ we get (\ref{eq04011}).
\end{proof}

Hence, to ensure $Q_{N}\left(  \mathbf{U}_{K,N_{1}N_{2},J},G_{K,N_{1}N_{2}%
}\right)  >Q_{N}\left(  \mathbf{V}_{K,N_{1}N_{2}},G_{K,N_{1}N_{2}}\right)  $
(i.e., that the natural clustering $\mathbf{V}_{K,N_{1}N_{2}}$ has lower
modularity than $\mathbf{U}_{K,N_{1}N_{2},J}$) it suffices to select
$K,N_{1},N_{2},J$ appropriately and use Lemmas \ref{prp0401} and
\ref{prp0402}. A sufficient condition, obtained from (\ref{eq04002}) and
(\ref{eq04011}), is
\begin{equation}
1-\frac{1}{K\left(  N_{1}+N_{2}-2\right)  }J-\frac{2\left(  N_{1}%
+N_{2}\right)  ^{2}}{\left(  N_{1}+N_{2}-2\right)  ^{2}}J^{-1}>1-\frac{\left(
N_{1}-1\right)  ^{2}+\left(  N_{2}-1\right)  ^{2}}{K\left(  N_{1}%
+N_{2}-2\right)  ^{2}}. \label{eq04021}%
\end{equation}
Inspecting (\ref{eq04021}) we see that one way to satisfy it is by fixing
$N_{1}$ and letting $J$ be \textquotedblleft sufficiently
larger\textquotedblright\ than $K$ and $N_{2}$ \textquotedblleft sufficiently
larger\textquotedblright\ than $J$. This is the main idea used in the proof of
the following theorem.

\begin{theorem}
\label{prp0403}For every $K\in\mathbb{N}$ and $\varepsilon\in\left(
0,\frac{1}{2K}\right)  $ there exist $N_{1},N_{2},J\in\mathbb{N}$ (depending
on $\varepsilon$ and $K$) such that%
\begin{align}
Q_{N}\left(  \mathbf{V}_{K,N_{1},N_{2}},G_{K,N_{1},N_{2}}\right)   &
<1-\frac{1}{2K}<1-\varepsilon<Q_{N}\left(  \mathbf{U}_{K,N_{1},N_{2}%
,J},G_{K,N_{1},N_{2}}\right)  ,\label{eq04022}\\
S\left(  \mathbf{V}_{K,N_{1},N_{2}},\mathbf{U}_{K,N_{1},N_{2},J}\right)   &
<\varepsilon. \label{eq04023}%
\end{align}

\end{theorem}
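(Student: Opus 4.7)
The plan is to fix $N_{1}=3$ and then to choose $J$ as a function of $K$ and $\varepsilon$, followed by $N_{2}$ as a function of $K$, $\varepsilon$, and $J$. This ordering avoids a circular dependence: $J$ must be large enough both to shrink the $J^{-1}$ term from Lemma~\ref{prp0402} and to make the Jaccard similarity small, while $N_{2}$ must then be large enough to dominate $J$ in the first term of Lemma~\ref{prp0402}. Inequalities~\eqref{eq04022} will follow from Lemmas~\ref{prp0401} and~\ref{prp0402}, while~\eqref{eq04023} will follow from a direct combinatorial count of intracluster pairs.

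For the left half of~\eqref{eq04022}, I would plug $N_{1}=3$ into Lemma~\ref{prp0401} and observe that $Q_{N}(\mathbf{V}_{K,N_{1},N_{2}},G_{K,N_{1},N_{2}})<1-\tfrac{1}{2K}$ is equivalent to $2\bigl[(N_{1}-1)^{2}+(N_{2}-1)^{2}\bigr]>(N_{1}+N_{2}-2)^{2}$, which expands to $(N_{1}-N_{2})^{2}>0$. Hence any $N_{2}\neq 3$ suffices, and the subsequent choice will automatically make $N_{2}$ much larger than $3$. For the right half, I would note that $(N_{1}+N_{2})^{2}/(N_{1}+N_{2}-2)^{2}\leq 9/4$ whenever $N_{1}+N_{2}\geq 6$, so the $J^{-1}$ term in Lemma~\ref{prp0402} is bounded by $\tfrac{9}{2}J^{-1}$. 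Taking $J\geq 9/\varepsilon$ makes this less than $\varepsilon/2$ uniformly in $N_{2}$; then taking $N_{2}$ so that $K(N_{1}+N_{2}-2)>2J/\varepsilon$ makes the first term of Lemma~\ref{prp0402} less than $\varepsilon/2$; together these give $Q_{N}(\mathbf{U}_{J},G_{K,N_{1},N_{2}})>1-\varepsilon$.

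For~\eqref{eq04023}, let $P_{\mathbf{V}}$ and $P_{\mathbf{U}}$ denote the numbers of node pairs sharing a cluster under $\mathbf{V}$ and under $\mathbf{U}_{J}$, respectively. From $a_{10}+a_{11}=P_{\mathbf{V}}$ and $a_{11}\leq a_{01}+a_{11}=P_{\mathbf{U}}$ one gets the very coarse bound $S(\mathbf{V},\mathbf{U}_{J})\leq a_{11}/(a_{10}+a_{11})\leq P_{\mathbf{U}}/P_{\mathbf{V}}$. Direct counting yields $P_{\mathbf{V}}=K\bigl[\binom{N_{1}}{2}+\binom{N_{2}}{2}\bigr]$ and $P_{\mathbf{U}}\leq(J+1)\binom{L}{2}\leq(J+1)L^{2}/2$ with $L\leq K(N_{1}+N_{2})/J$, so $P_{\mathbf{U}}/P_{\mathbf{V}}\to(J+1)K/J^{2}\leq 2K/J$ as $N_{2}\to\infty$ with $N_{1},J,K$ fixed. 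Enlarging $J$ (still as a function of $\varepsilon$ and $K$ alone) so that $2K/J<\varepsilon/2$, and then enlarging $N_{2}$ so that $P_{\mathbf{U}}/P_{\mathbf{V}}$ is within $\varepsilon/2$ of its limit, yields $S<\varepsilon$. The main technical obstacle is bookkeeping rather than conceptual: one must satisfy all asymptotic conditions on $J$ and $N_{2}$ simultaneously, but since every constraint on $J$ involves only $\varepsilon$ and $K$, the final choice of $N_{2}$ (taken arbitrarily large) is unobstructed.
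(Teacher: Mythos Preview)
Your argument is correct and in fact cleaner than the paper's. The paper fixes $N_{1}=3$ as you do, but then couples the remaining parameters through a single variable $x$, setting $J=xK$ and $N_{2}=x^{2}K$, and analyzes the resulting rational expressions in $z=1/x$ via Taylor expansions with remainder estimates near $z=0$. This single-parameter coupling is conceptually tidy (one knob controls everything) but the analysis is heavier: the constants $A,B$ from the remainder bounds are never made explicit, and the reduction of the left inequality in~\eqref{eq04022} to the trivial $(N_{1}-N_{2})^{2}>0$ is obscured.

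Your decoupled approach---first choose $J$ depending only on $K$ and $\varepsilon$, then choose $N_{2}$ large enough given $J$---avoids all analytic machinery and yields explicit thresholds: $J\geq\max(9/\varepsilon,\,4K/\varepsilon)$ handles both the $J^{-1}$ term of Lemma~\ref{prp0402} and the Jaccard bound, after which any $N_{2}$ exceeding a computable function of $K,\varepsilon,J$ finishes the job. The only minor imprecision is that what actually tends to $(J+1)K/J^{2}$ is your \emph{upper bound} on $P_{\mathbf{U}}/P_{\mathbf{V}}$, not the ratio itself; but since you only need an upper bound on $S$, this is harmless. Overall your route is more elementary and more transparent about the dependence on $\varepsilon$ and $K$.
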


\begin{proof}
Take any $K$ and let $N_{1}=3$,$\ J=xK$,$\ N_{2}=x^{2}K$ (with $x\in
\mathbb{N}$). To prove (\ref{eq04022}) note that%
\begin{align*}
Q_{N}\left(  \mathbf{V}_{K,N_{1},N_{2}},G_{K,N_{1},N_{2}}\right)   &
=1-\frac{4+\left(  x^{2}K-1\right)  ^{2}}{K\left(  1+x^{2}K\right)  ^{2}},\\
Q_{N}\left(  \mathbf{U}_{K,N_{1},N_{2},J},G_{K,N_{1},N_{2}}\right)   &
\geq1-\frac{x}{\left(  1+x^{2}K\right)  }-\frac{2\left(  3+x^{2}K\right)
^{2}}{\left(  1+x^{2}K\right)  ^{2}xK}.
\end{align*}
Define $z=\frac{1}{x}$; then we have $x=\frac{1}{z}$ and
\begin{equation}
Q_{N}\left(  \mathbf{V}_{K,N_{1},N_{2}},G_{K,N_{1},N_{2}}\right)
=1-\frac{4+\left(  x^{2}K-1\right)  ^{2}}{K\left(  1+x^{2}K\right)  ^{2}%
}=1-\frac{4+\left(  \left(  1/z\right)  ^{2}K-1\right)  ^{2}}{K\left(
1+\left(  1/z\right)  ^{2}K\right)  ^{2}}. \label{eq04025a}%
\end{equation}
We can simplify the final $Q_{N}\left(  \mathbf{V}_{K,N_{1},N_{2}}%
,G_{K,N_{1},N_{2}}\right)  $ expression of (\ref{eq04025a}) and write it as
the following function%
\[
f_{1}\left(  z\right)  =\frac{K^{3}-K^{2}+2\left(  K+K^{2}\right)
z^{2}+\left(  K-5\right)  z^{4}}{K\left(  z^{2}+K\right)  ^{2}}.
\]
Now, $1-\frac{4+\left(  \left(  1/z\right)  ^{2}K-1\right)  ^{2}}{K\left(
1+\left(  1/z\right)  ^{2}K\right)  ^{2}}$ has a removable singularity at
$z_{0}=0$, but for every other $z\in\mathbb{R}$ it is identical to
$f_{1}\left(  z\right)  $. We can expand $f_{1}\left(  z\right)  $ in a Taylor
series around $z_{0}=0$ which will also hold for $Q_{N}\left(  \mathbf{V}%
_{K,N_{1},N_{2}},G_{K,N_{1},N_{2}}\right)  $. Hence around $z_{0}=0$ we have
\[
Q_{N}\left(  \mathbf{V}_{K,N_{1},N_{2}},G_{K,N_{1},N_{2}}\right)  =1-\frac
{1}{K}+r_{1}\left(  z\right)  ,
\]
where $r_{1}\left(  z\right)  =a_{2}z^{2}+a_{3}z^{3}+...$ and, from the Taylor
series remainder theorem,\ there exists a constant $A$ such that, for $z$
close to zero, we have
\[
\left\vert r_{1}\left(  z\right)  \right\vert <Az^{2}.
\]
Then, for large finite $x$ (and, in particular, for $x>\sqrt{2KA}$) we have
\begin{equation}
Q_{N}\left(  \mathbf{V}_{K,N_{1},N_{2}},G_{K,N_{1},N_{2}}\right)
=1-\frac{4+\left(  x^{2}K-1\right)  ^{2}}{K\left(  1+x^{2}K\right)  ^{2}%
}<1-\frac{1}{K}+\frac{A}{x^{2}}<1-\frac{1}{2K}. \label{eq04025}%
\end{equation}
Similarly (with $z=\frac{1}{x}$)\ we have \
\begin{align}
Q_{N}\left(  \mathbf{U}_{K,N_{1},N_{2},J},G_{K,N_{1},N_{2}}\right)   &
=1-\frac{xK}{K\left(  1+x^{2}K\right)  }-\frac{2\left(  3+x^{2}K\right)  ^{2}%
}{\left(  1+x^{2}K\right)  ^{2}xK}\nonumber\\
&  =1-\frac{\left(  1/z\right)  }{\left(  1+\left(  1/z\right)  ^{2}K\right)
}-\frac{2\left(  3+\left(  1/z\right)  ^{2}K\right)  ^{2}}{\left(  1+\left(
1/z\right)  ^{2}K\right)  ^{2}\left(  1/z\right)  K} \label{eq04025b}%
\end{align}
Again, we can rewrite the final $Q_{N}\left(  \mathbf{U}_{K,N_{1},N_{2}%
,J},G_{K,N_{1},N_{2}}\right)  $ expression of (\ref{eq04025b}) as
\[
f_{2}\left(  z\right)  =\frac{K^{3}-3K^{2}z+2K^{2}z^{2}-13Kz^{3}%
+Kz^{4}-18z^{5}}{K\left(  z^{2}+K\right)  ^{2}}%
\]
and $1-\frac{\left(  1/z\right)  }{\left(  1+\left(  1/z\right)  ^{2}K\right)
}-\frac{2\left(  3+\left(  1/z\right)  ^{2}K\right)  ^{2}}{\left(  1+\left(
1/z\right)  ^{2}K\right)  ^{2}\left(  1/z\right)  K}$ has a removable
singularity at $z_{0}=0$, but for every other $z\in\mathbb{R}$ it is identical
to $f_{2}\left(  z\right)  $. Hence we can expand $f_{2}\left(  z\right)  $ in
a Taylor series around $z_{0}=0$ which will also hold for $Q_{N}\left(
\mathbf{U}_{K,N_{1},N_{2},J},G_{K,N_{1},N_{2}}\right)  $. Hence around
$z_{0}=0$ we have%
\[
Q_{N}\left(  \mathbf{U}_{K,N_{1},N_{2},J},G_{K,N_{1},N_{2}}\right)
=1-\frac{3}{K}z+r_{2}\left(  z\right)
\]
where $r_{2}\left(  z\right)  =b_{3}z^{3}+b_{4}z^{4}+...$ and there exists a
constant $B$ such that, for $z$ close to zero, we have
\[
\left\vert r_{2}\left(  z\right)  \right\vert <Bz^{3}<Bz^{2};
\]
this in turn implies that
\[
r_{2}\left(  z\right)  >-Bz^{2}.
\]
Then, for large $x$ (and, in particular, for $x>KB$) we have \
\begin{align}
Q_{N}\left(  \mathbf{U}_{K,N_{1},N_{2},J},G_{K,N_{1},N_{2}}\right)   &
=1-\frac{xK}{K\left(  1+x^{2}K\right)  }-\frac{2\left(  3+x^{2}K\right)  ^{2}%
}{\left(  1+x^{2}K\right)  ^{2}xK}\nonumber\\
&  >1-\frac{3}{Kx}-\frac{B}{x^{2}}>1-\frac{4}{Kx}. \label{eq04026}%
\end{align}
For any $\varepsilon\in\left(  0,\frac{1}{2K}\right)  $, choose any
$x>\max\left(  \frac{4}{K\varepsilon},\sqrt{2KA},KB\right)  $; then we have
$\frac{1}{2K}>\varepsilon>\frac{4}{Kx}$ which, combined with (\ref{eq04025})
and (\ref{eq04026}), gives
\[
Q_{N}\left(  \mathbf{U}_{K,N_{1},N_{2},J},G_{K,N_{1},N_{2}}\right)
>1-\frac{4}{Kx}>1-\varepsilon>1-\frac{1}{2K}>Q_{N}\left(  \mathbf{V}%
_{K,N_{1},N_{2}},G_{K,N_{1},N_{2}}\right)  .
\]
In short, we can satisfy (\ref{eq04022}) for every $K\in\mathbb{N}$ and every
$\varepsilon\in\left(  0,\frac{1}{2K}\right)  $,\ by taking $x$
\textquotedblleft sufficiently large\textquotedblright\ and $N_{1}=3$%
,$\ J=xK$,$\ N_{2}=x^{2}K$.

We now turn to (\ref{eq04023}). Let $b$ (resp. $c$)\ be the number of node
pairs in the same cluster under $\mathbf{U}_{K,N_{1},N_{2},J}$ (resp. under
$\mathbf{V}_{K,N_{1},N_{2}}$). We obviously have $b=a_{01}+a_{11}\geq a_{11}$
and $a_{10}+a_{01}+a_{11}\geq a_{10}+a_{11}=c>0$. Hence%
\[
S\left(  \mathbf{U}_{K,N_{1},N_{2},J},\mathbf{V}_{K,N_{1},N_{2}}\right)
=\frac{a_{11}}{a_{10}+a_{01}+a_{11}}\leq\frac{b}{c}.
\]
We first obtain an upper bound for $b$. Since each $U_{j}$ contains no more
than $L=\frac{n}{J}$ nodes , the number of node pairs that can be formed in
$U_{j}$ is no more than $\frac{\left(  \frac{n}{J}\right)  \left(  \frac{n}%
{J}-1\right)  }{2}<\frac{n^{2}/2}{J^{2}}$. Also, $n=K\left(  N_{1}%
+N_{2}\right)  $ so, for big $N_{2}$, $\frac{n^{2}/2}{J^{2}}<\frac{\left(
2KN_{2}\right)  ^{2}}{J^{2}}$. There are at most $J+1$ clusters, so we have
\[
b<\left(  J+1\right)  \frac{\left(  2KN_{2}\right)  ^{2}}{J^{2}}=\left(
xK+1\right)  \frac{\left(  2Kx^{2}K\right)  ^{2}}{\left(  xK\right)  ^{2}%
}=4K^{3}x^{3}+4K^{2}x^{2}.
\]
Next we compute $c$. In $\mathbf{V}_{K,N_{1},N_{2}}$ there exist $K$ clusters
of $N_{1}=3$ nodes and each cluster has $\frac{N_{1}\left(  N_{1}-1\right)
}{2}=3$ node pairs; there also exist $K$ clusters of $N_{2}\ $nodes and each
cluster has $\frac{N_{2}\left(  N_{2}-1\right)  }{2}$ node pairs. We have
\[
c=3K+K\frac{N_{2}\left(  N_{2}-1\right)  }{2}=3K+K\frac{x^{2}K\left(
x^{2}K-1\right)  }{2}=\allowbreak\frac{1}{2}K^{3}x^{4}-\frac{1}{2}K^{2}%
x^{2}+3K\allowbreak.
\]

And so we have
\begin{align*}
0  &  \leq S\left(  \mathbf{U}_{K,N_{1},N_{2},J},\mathbf{V}_{K,N_{1},N_{2}%
}\right)  <\frac{4K^{3}x^{3}+4K^{2}x^{2}}{\frac{1}{2}K^{3}x^{4}-\frac{1}%
{2}K^{2}x^{2}+3K}\Rightarrow\\
0  &  \leq\lim_{x\rightarrow\infty}S\left(  \mathbf{U}_{K,N_{1},N_{2}%
,J},\mathbf{V}_{K,N_{1},N_{2}}\right)  \leq\lim_{x\rightarrow\infty}%
\frac{4K^{3}x^{3}+4K^{2}x^{2}}{\frac{1}{2}K^{3}x^{4}-\frac{1}{2}K^{2}x^{2}%
+3K}=0.
\end{align*}
Hence, for every $\varepsilon>0$ and $x$ sufficiently large, (\ref{eq04023}%
)\ is satisfied.
\end{proof}

We see from (\ref{eq04022}) that we can always find a clustering
$\mathbf{U}_{K,N_{1},N_{2},J}$ which achieves higher modularity than the
natural clustering $\mathbf{V}_{K,N_{1},N_{2}}$ and, in fact, greater than
$1-\varepsilon$, where $\varepsilon$ can get arbitrarily small independently
of $K$. On the other hand, $Q_{N}\left(  \mathbf{V}_{K,N_{1},N_{2}}%
,G_{K,N_{1},N_{2}}\right)  $ is no greater than $1-\frac{1}{2K}$; for small
$K$ values this can be appreciably less than one. In other words, we can
choose $K$ so that $G_{K,N_{1},N_{2}}$ \ does not have very high
\textquotedblleft natural modularity\textquotedblright\ but its
\textquotedblleft artificial modularity\textquotedblright\ (the one achieved
by the pair $\left(  \mathbf{U}_{K,N_{1},N_{2},J},G_{K,N_{1},N_{2}}\right)  $
) can be arbitrarily close to one.

We see from (\ref{eq04023}) that, with respect to the Jaccard similarity
criterion, $\mathbf{U}_{K,N_{1},N_{2},J}$ is very different from
$\mathbf{V}_{K,N_{1},N_{2}}$. We could have reached a similar conclusion in a
simpler manner. Recall that the number of clusters of $\mathbf{U}%
_{K,N_{1},N_{2},J}$ is at least $J=xK$ and we can choose $x$ arbitrarily
large; on the other hand, $\mathbf{V}_{K,N_{1},N_{2}}$ has $2K$ clusters.
Intuitively, $\mathbf{U}_{K,N_{1},N_{2},J}$ must be very different from
$\mathbf{V}_{K,N_{1},N_{2}}$, since the ratio of their cluster number is
$\frac{x}{2}$ and $x$ can become arbitrarily large (of course the Jaccard
similarity index captures this fact in a more precise manner).

Let $\mathbf{V}^{\ast}=\arg\max_{V\in\mathcal{V}}Q_{N}\left(  \mathbf{V,}%
G_{K,N_{1},N_{2}}\right)  $. While it is conceivable that $\mathbf{V}^{\ast}$
is more similar (in the Jaccard sense)\ to $\mathbf{V}_{K,N_{1},N_{2}}$ than
to some $\mathbf{U}_{K,N_{1},N_{2},J}$, this seems unlikely. In light of the
remarks of Section \ref{sec0303}, it is more likely that $\mathbf{V}^{\ast}$
will have many more clusters than $\mathbf{V}$. In other words, it appears
that, for the graphs $G_{K,N_{1},N_{2}}$, modularity maximization leads to an
overestimation of the number of clusters, i.e., we have a case of modularity
\textquotedblleft\emph{over-resolution}\textquotedblright.

The bounds utilized in Lemmas \ref{prp0401}-\ref{prp0402} and Theorem
\ref{prp0403} are quite conservative. In many cases the inequality
\begin{equation}
Q_{N}\left(  \mathbf{V}_{K,N_{1},N_{2}},G_{K,N_{1},N_{2}}\right)
<Q_{N}\left(  \mathbf{U}_{K,N_{1},N_{2},J},G_{K,N_{1},N_{2}}\right)
\label{eq04028}%
\end{equation}
is attained even when the abovementioned bounds are not satisfied. This can be
seen in Table 1, which has been compiled by taking fixed $K=3$, $N_{1}=3$ and
using several $x$ values (recall that $J=xK$, $N_{2}=x^{2}K$). The first six
entries of each column list the quantities used in the proof of Theorem
\ref{prp0403} and, for \textquotedblleft sufficiently large\textquotedblright%
\ $x$, should form an increasing sequence, in accordance to the inequalities
(\ref{eq04021})-(\ref{eq04022})\ and (\ref{eq04025})-(\ref{eq04026}). This is
indeed the case for $x=8$ and $x=10$; on the other hand, for $x=4$ and $x=6$
one inequality is violated (between the third and fourth row)\ but
(\ref{eq04028}) still holds.

\begin{center}%
\begin{tabular}
[c]{|l|l|l|l|l|l|}\hline
Row no. &  & $x=4$ & $x=6$ & $x=8$ & $x=10$\\\hline
1 & $Q_{N}\left(  \mathbf{V}_{K,N_{1},N_{2}},G_{K,N_{1},N_{2}}\right)  $ &
0.6928 & 0.6787 & 0.6735 & 0.6711\\\hline
2 & $1-\frac{\left(  N_{1}-1\right)  ^{2}+\left(  N_{2}-1\right)  ^{2}%
}{K\left(  N_{1}+N_{2}-2\right)  ^{2}}$ & 0.6928 & 0.6787 & 0.6735 &
0.6711\\\hline
3 & $1-\frac{1}{2K}$ & 0.8333 & 0.8333 & 0.8333 & 0.8333\\\hline
4 & $1-\frac{4}{Kx}$ & 0.6667 & 0.7778 & 0.8333 & 0.8667\\\hline
5 & $1-\frac{1}{K\left(  N_{1}+N_{2}-2\right)  }J-\frac{2\left(  N_{1}%
+N_{2}\right)  ^{2}}{\left(  N_{1}+N_{2}-2\right)  ^{2}}J^{-1}$ & 0.7378 &
0.8297 & 0.8735 & 0.8992\\\hline
6 & $Q_{N}\left(  \mathbf{U}_{K,N_{1},N_{2},J},G_{K,N_{1},N_{2}}\right)  $ &
0.8407 & 0.8915 & 0.9179 & 0.9340\\\hline
7 & $S\left(  \mathbf{U}_{K,N_{1},N_{2},J},\mathbf{V}_{K,N_{1},N_{2}}\right)
$ & 0.2196 & 0.1540 & 0.1190 & 0.0967\\\hline
\end{tabular}

\textbf{Table 1. }Several quantities appearing in the proof of Theorem
\textbf{ }\ref{prp0403}. In each column and for rows 1 to 6, for large enough
$x$, the value of each row must be no less than that of the previous one.
\end{center}

From rows 1 and 6 of Table 1 we see that $Q_{N}\left(  \mathbf{V}%
_{K,N_{1},N_{2}},G_{K,N_{1},N_{2}}\right)  $ is a decreasing and $Q_{N}\left(
\mathbf{U}_{K,N_{1},N_{2},J},G_{K,N_{1},N_{2}}\right)  $ an increasing
function of $x$ . From row 7 we see that the the Jaccard similarity is a
decreasing function of $x$. These observations verify straightforward
conclusions which can be drawn from the proof of Theorem \ref{prp0403}.

\subsection{Second Example\label{sec0402}}

It might be argued that the results of Section \ref{sec0401} are only possible
because we have used the disconnected graphs $G_{K,N_{1},N_{2}}$. This is not
the case. In this section we will illustrate the same issues using the family
of \emph{connected }graphs $H_{K,N_{1},N_{2}}$ illustrated in Figure
\ref{fig0402}. We start with connected $H_{N_{1},N_{2}}$ graphs, each of which
is a path of $N_{1}+N_{2}$ nodes, with extra edges added between the first
$N_{1}$ (resp. the second $N_{2}$) nodes at distance two of each other. Then
we construct the $H_{K,N_{1},N_{2}}$ graphs by joining in series $K$
$H_{N_{1},N_{2}}$ subgraphs. \begin{figure}[h]
\centering\scalebox{0.6}{\includegraphics{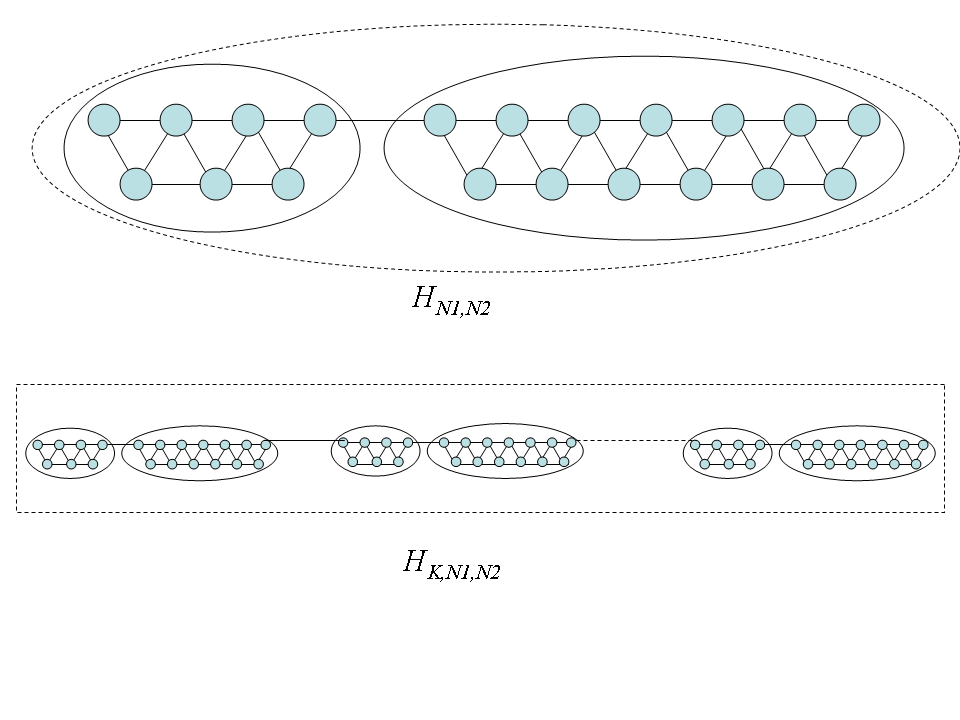}}\caption[Graph Family
A]{Graph Family $H_{K,N_{1},N_{2}}.$}%
\label{fig0402}%
\end{figure}

We will use the same clusterings $\mathbf{V}_{K,N_{1},N_{2}}\ $and clustering
sequences $\left\{  \mathbf{U}_{K,N_{1},N_{2}J}\right\}  _{J=1}^{n}$ as in
Section \ref{sec0401}. Once again, for reasons similar to the ones discussed
in Section \ref{sec0401}, we claim that $\mathbf{V}_{K,N_{1},N_{2}}$ is the
natural clustering of $H_{K,N_{1},N_{2}}$. Namely, cluster boundaries should
occur across edges incident on the most weakly connected nodes; this shows
that the $V_{K,N_{1},N_{2},k}$ clusters must be preserved; any partition of
$V_{K,N_{1},N_{2},k}$ into finer clusters cannot be justified, since all of
its edges have the same connectivity pattern. Hence $\mathbf{V}_{K,N_{1}%
,N_{2}}$ is the \textquotedblleft intuitively best\textquotedblright\ (i.e.,
the \textquotedblleft natural\textquotedblright) clustering of $H_{K,N_{1}%
,N_{2}}$.

Once again, we obtain (in three steps) a result similar to Theorem
\ref{prp0403}. First we need two lemmas.

\begin{lemma}
\label{prp0404}For every $K,N_{1},N_{2}\in\mathbb{N}$ with $N_{1},N_{2}%
\geq5\mathbb{\ }$we have%
\begin{equation}
Q_{N}\left(  \mathbf{V}_{K,N_{1},N_{2}},H_{K,N_{1},N_{2}}\right)
<1-\frac{K\left(  \left(  4N_{1}-8\right)  ^{2}+\left(  4N_{2}-8\right)
^{2}\right)  }{\left(  4K\left(  N_{1}+N_{2}-2\right)  \right)  ^{2}}.
\label{eq04051}%
\end{equation}

\end{lemma}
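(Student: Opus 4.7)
The plan is to decompose $Q_N(\mathbf{V}_{K,N_1,N_2},H_{K,N_1,N_2}) = Q_f - Q_0$ and lower bound $Q_0$ while invoking the trivial upper bound $Q_f \leq 1$. The strict inequality in the conclusion will drop out of a strict inequality $(2m)^2 < (4K(N_1+N_2-2))^2$, which holds because joining $K$ copies of $H_{N_1,N_2}$ in series produces a specific edge count that is slightly short of $2K(N_1+N_2-2)$.

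First I would enumerate the edges of $H_{K,N_1,N_2}$. Each $H_{N_1,N_2}$ copy contributes $N_1+N_2-1$ path edges together with $(N_1-2)+(N_2-2) = N_1+N_2-4$ distance-two edges, and joining $K$ copies in series adds only a small number of bridging edges. A straightforward tally then yields $m < 2K(N_1+N_2-2)$, so that $(2m)^2 < (4K(N_1+N_2-2))^2$, and consequently $1/(2m)^2 > 1/(4K(N_1+N_2-2))^2$.

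Second I would lower bound each cluster degree. For an $N_1$-cluster $V_k$, the $N_1-1$ intracluster path edges and the $N_1-2$ intracluster distance-two edges each contribute $2$ to $\deg(V_k)$, giving
\[
\deg(V_k) \;\geq\; 2(2N_1-3) \;=\; 4N_1-6 \;>\; 4N_1-8;
\]
symmetrically, $\deg(V_{k'}) > 4N_2-8$ for each $N_2$-cluster. Since there are $K$ clusters of each type,
\[
\sum_{k=1}^{2K}\bigl(\deg(V_k)\bigr)^2 \;>\; K\bigl[(4N_1-8)^2 + (4N_2-8)^2\bigr].
\]
Combining these two ingredients gives
\[
Q_0(\mathbf{V}_{K,N_1,N_2},H_{K,N_1,N_2}) \;=\; \frac{\sum_k (\deg V_k)^2}{(2m)^2} \;>\; \frac{K\bigl[(4N_1-8)^2+(4N_2-8)^2\bigr]}{\bigl(4K(N_1+N_2-2)\bigr)^2},
\]
and since $Q_f \leq 1$ we conclude $Q_N = Q_f - Q_0 < 1 - K[(4N_1-8)^2+(4N_2-8)^2]/(4K(N_1+N_2-2))^2$, as desired.

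The main subtlety is that the boundary nodes at cluster endpoints have degree strictly less than the interior value $4$; the deliberately loose lower bound $4N_1-8$ (in place of the sharper $4N_1-5$ one could derive with per-position case analysis over first, interior, and last copies in the series) is chosen precisely to absorb these boundary deficits uniformly, so that no position-dependent bookkeeping is needed. The hypothesis $N_1,N_2 \geq 5$ is exactly what is required so that $4N_1-8$ and $4N_2-8$ are comfortably positive and the distance-two edge structure is well defined.
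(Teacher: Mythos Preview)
Your proposal is correct and follows essentially the same route as the paper: bound $Q_f\le 1$, show $m<2K(N_1+N_2-2)$, show $\deg(V_k)>4N_i-8$, and combine to lower-bound $Q_0$. The only cosmetic difference is that you obtain the degree lower bound by counting intracluster edges ($2(2N_i-3)=4N_i-6$), whereas the paper obtains it by classifying each node of $V_k$ as ``inner'' (degree~$4$) or ``border'' (degree at least~$2$) and summing; both yield $\deg(V_k)>4N_i-8$ and the rest of the argument is identical.
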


\begin{proof}
We fix $K,N_{1},N_{2}$ and, for brevity, we write $H$ for $H_{K,N_{1},N_{2}}$
and $\mathbf{V}$ for $\mathbf{V}_{K,N_{1},N_{2}}$; $\mathbf{V}^{\prime}$ and
$\mathbf{V}^{\prime\prime}$ have the same meaning as previously. In
$\mathbf{V}_{K,N_{1},N_{2}}$ there exist $2K-1$ extracluster edges, so we
have
\begin{equation}
\frac{\sum_{k=1}^{2K}\left\vert E_{k}\right\vert }{m}<1. \label{eq04054}%
\end{equation}
For each $V_{k}\in\mathbf{V}^{\prime}$, \ there are two border nodes on the
left, two border nodes on the right and $N_{1}-4$ inner nodes. Each of the
inner nodes has degree 4; each of the border nodes has degree 3, except for
the first and last node of the graph, which have degree 2. Hence for each
$V_{k}\in\mathbf{V}^{\prime}$ we have the bounds \
\[
\left(  N_{1}-4\right)  \cdot4+4\cdot2=4N_{1}-8<\deg\left(  V_{k}\right)
<4N_{1}-4=\text{ }\left(  N_{1}-4\right)  \cdot4+4\cdot3.
\]
Similarly, for each $V_{k}\in\mathbf{V}^{\prime\prime}$ we have the bounds%
\[
\left(  N_{2}-4\right)  \cdot4+4\cdot2=4N_{2}-8<\deg\left(  V_{k}\right)
<\text{ }4N_{2}-4=\left(  N_{2}-4\right)  \cdot4+4\cdot3.
\]
The total number of edges is $m=\frac{\sum_{k=1}^{2K}\deg\left(  V_{k}\right)
}{2}$ and we have
\begin{align}
\frac{K\left(  4N_{1}-8+4N_{2}-8\right)  }{2}  &  <\frac{\sum_{k=1}^{2K}%
\deg\left(  V_{k}\right)  }{2}<\frac{K\left(  4N_{1}-4+4N_{2}-4\right)  }%
{2}\Rightarrow\nonumber\\
2K\left(  N_{1}+N_{2}-4\right)   &  <m<2K\left(  N_{1}+N_{2}-2\right)  .
\label{eq04053}%
\end{align}
In addition we have%
\begin{equation}
K\left(  \left(  4N_{1}-8\right)  ^{2}+\left(  4N_{2}-8\right)  ^{2}\right)
<\sum_{k=1}^{2K}\left(  \deg\left(  V_{k}\right)  \right)  ^{2}<K\left(
\left(  4N_{1}-4\right)  ^{2}+\left(  4N_{2}-4\right)  ^{2}\right)  .
\label{eq04052}%
\end{equation}
Combining (\ref{eq04053}) and (\ref{eq04052}) we get
\begin{equation}
\frac{\sum_{k=1}^{2K}\left(  \deg\left(  V_{k}\right)  \right)  ^{2}}{\left(
2m\right)  ^{2}}>\frac{K\left(  \left(  4N_{1}-8\right)  ^{2}+\left(
4N_{2}-8\right)  ^{2}\right)  }{\left(  4K\left(  N_{1}+N_{2}-2\right)
\right)  ^{2}}. \label{eq04055}%
\end{equation}
Finally, combining (\ref{eq04054}) and (\ref{eq04055})\ we get the required bound.
\end{proof}

\begin{lemma}
\label{prp0405}For every $K,N_{1},N_{2},J\in\mathbb{N}$ with $N_{1},N_{2}%
\geq5\mathbb{\ }$and $J\leq n=K\left(  N_{1}+N_{2}\right)  $ we have%
\begin{equation}
Q_{N}\left(  \mathbf{U}_{K,N_{1},N_{2},J},H_{K,N_{1},N_{2}}\right)
>1-\frac{3}{2K\left(  N_{1}+N_{2}-4\right)  }J-\frac{2\left(  N_{1}%
+N_{2}\right)  ^{2}}{\left(  N_{1}+N_{2}-4\right)  ^{2}}J^{-1}.
\label{eq04061}%
\end{equation}

\end{lemma}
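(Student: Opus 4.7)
The plan is to mirror the proof of Lemma \ref{prp0402}, exploiting the fact that $\mathbf{U}_{K,N_{1},N_{2},J}$ partitions the node labels into consecutive blocks of size $L=\lfloor n/J\rfloor$, and that in $H_{K,N_{1},N_{2}}$ every edge connects nodes whose labels differ by at most $2$ (the path edges and the distance-$2$ chord edges), plus at most a constant number of ``gluing'' edges between consecutive copies of $H_{N_{1},N_{2}}$. I would begin by writing
\[
Q_{N}\!\left(\mathbf{U}_{J},H\right)=\frac{\sum_{k=1}^{J+1}\lvert E_{k}\rvert}{m}-\frac{\sum_{k=1}^{J+1}\bigl(\deg(U_{k})\bigr)^{2}}{(2m)^{2}},
\]
and then bound each of the two summands separately, reusing the edge-count bound $m>2K(N_{1}+N_{2}-4)$ already established in the proof of Lemma \ref{prp0404}.

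For the intracluster edge fraction, the key observation is that $\mathbf{U}_{J}$ has at most $J$ boundaries between consecutive blocks. Every edge of $H_{K,N_{1},N_{2}}$ joins labels that differ by $1$ or $2$, so at each boundary between $U_{k}$ and $U_{k+1}$ at most three edges can be cut: the path edge across the boundary, plus the two possible distance-$2$ chords $\{i-1,i+1\}$ and $\{i,i+2\}$. Hence the total number of extracluster edges is at most $3J$, so
\[
\frac{\sum_{k=1}^{J+1}\lvert E_{k}\rvert}{m}\geq 1-\frac{3J}{m}>1-\frac{3J}{2K(N_{1}+N_{2}-4)},
\]
using the lower bound on $m$. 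This produces the first subtracted term of the claimed bound.

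For the degree term, I would note that every node of $H_{K,N_{1},N_{2}}$ has degree at most $4$ (two path neighbors plus two distance-$2$ neighbors) and each cluster $U_{k}$ contains at most $L\leq n/J=K(N_{1}+N_{2})/J$ nodes, while there are at most $J+1$ clusters. Therefore
\[
\frac{\sum_{k=1}^{J+1}\bigl(\deg(U_{k})\bigr)^{2}}{(2m)^{2}}\leq\frac{(J+1)\bigl(4K(N_{1}+N_{2})/J\bigr)^{2}}{\bigl(4K(N_{1}+N_{2}-4)\bigr)^{2}}\leq\frac{2(N_{1}+N_{2})^{2}}{(N_{1}+N_{2}-4)^{2}}J^{-1},
\]
where in the last step I would use $(J+1)/J\leq 2$ for $J\geq 1$ together with the lower bound $m>2K(N_{1}+N_{2}-4)$. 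Subtracting this from the previous estimate gives (\ref{eq04061}).

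The main obstacle is not analytic but combinatorial: the bookkeeping of how many edges actually cross each boundary of $\mathbf{U}_{J}$, since the chord pattern of $H_{N_{1},N_{2}}$ is interrupted at the junction between the first $N_{1}$ and the second $N_{2}$ nodes of each copy, and again at the gluing between successive copies of $H_{N_{1},N_{2}}$. One has to verify that the ``three edges per boundary'' upper bound is valid uniformly at all such locations; fortunately a worst-case estimate (assuming every potentially possible chord is present) is enough to establish the claim, so only an upper bound of $3$ per boundary is needed, and the precise structure at the gluings can be absorbed into this constant.
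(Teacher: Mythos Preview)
Your proposal is correct and follows essentially the same approach as the paper: the paper's proof also bounds the number of extracluster edges by $3J$ (at most three edges between any two successive blocks), uses $m>2K(N_{1}+N_{2}-4)$ to get the first term, and then bounds the degree sum via $\deg(U_k)\le 4\cdot n/J$ over at most $J+1$ clusters together with $(J+1)/J\le 2$ to get the second term. Your closing paragraph about the combinatorial bookkeeping at gluing points is extra caution the paper does not spell out, but it does not change the argument.
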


\begin{proof}
Extracluster edges in $\mathbf{U}_{J}$ can only occur between successive
clusters\footnote{There is an exception when $J=n$, but in this case too
(\ref{eq04062}) holds.} $U_{k}$, $U_{k+1}$; between any such pair there exist
at most three such edges; hence $\mathbf{U}_{J}$ cannot have more than $3J$
extracluster edges. Consequently%
\begin{equation}
\frac{\sum_{k=1}^{J+1}\left\vert E_{k}\right\vert }{m}\geq\frac{m-3J}%
{m}=1-\frac{3J}{m}>1-\frac{3J}{2K\left(  N_{1}+N_{2}-4\right)  }.
\label{eq04062}%
\end{equation}
Each $U_{k}$ has at most $\frac{n}{J}=\frac{K\left(  N_{1}+N_{2}\right)  }{J}$
nodes and each node has degree at most 4. Hence
\begin{equation}
\frac{\sum_{k=1}^{J+1}\left(  \deg\left(  U_{k}\right)  \right)  ^{2}}{\left(
2m\right)  ^{2}}\leq\frac{\left(  J+1\right)  \left(  4\frac{K\left(
N_{1}+N_{2}\right)  }{J}\right)  ^{2}}{\left(  4K\left(  N_{1}+N_{2}-4\right)
\right)  ^{2}}\leq\frac{2\left(  N_{1}+N_{2}\right)  ^{2}}{\left(  N_{1}%
+N_{2}-4\right)  ^{2}}J^{-1}. \label{eq04063}%
\end{equation}
Combining (\ref{eq04062})\ and (\ref{eq04063})\ we get the bound
(\ref{eq04061}).
\end{proof}

To ensure $Q_{N}\left(  \mathbf{U}_{K,N_{1}N_{2},J},H_{K,N_{1}N_{2}}\right)
>Q_{N}\left(  \mathbf{V}_{K,N_{1}N_{2}},H_{K,N_{1}N_{2}}\right)  $ it suffices
to choose appropriate $K,N_{1},N_{2},J$ and use Lemmas \ref{prp0404} and
\ref{prp0405}. A sufficient condition, obtained from (\ref{eq04051}) and
(\ref{eq04061}), is
\begin{equation}
1-\frac{3}{2K\left(  N_{1}+N_{2}-4\right)  }J-\frac{2\left(  N_{1}%
+N_{2}\right)  ^{2}}{\left(  N_{1}+N_{2}-4\right)  ^{2}}J^{-1}>1-\frac
{K\cdot\left(  \left(  4N_{1}-8\right)  ^{2}+\left(  4N_{2}-8\right)
^{2}\right)  }{\left(  4K\left(  N_{1}+N_{2}-2\right)  \right)  ^{2}}.
\label{eq04071}%
\end{equation}
Now we can prove the following.

\begin{theorem}
\label{prp0406}For every $K\in\mathbb{N}$ and $\varepsilon\in\left(
0,\frac{1}{2K}\right)  $ there exist $N_{1},N_{2},J\in\mathbb{N}$ (depending
on $\varepsilon,K$)\ such that
\begin{align}
Q_{N}\left(  \mathbf{V}_{K,N_{1},N_{2}},H_{K,N_{1},N_{2}}\right)   &
<1-\frac{1}{2K}<1-\varepsilon<Q_{N}\left(  \mathbf{U}_{K,N_{1},N_{2}%
,J},H_{K,N_{1},N_{2}}\right) \label{eq04072}\\
S\left(  \mathbf{V}_{K,N_{1},N_{2}},\mathbf{U}_{K,N_{1},N_{2},J}\right)   &
<\varepsilon. \label{eq04073}%
\end{align}

\end{theorem}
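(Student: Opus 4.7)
The plan is to reuse the proof strategy of Theorem \ref{prp0403} essentially verbatim, with Lemmas \ref{prp0404}--\ref{prp0405} replacing Lemmas \ref{prp0401}--\ref{prp0402}. Fix $K \in \mathbb{N}$ and set $N_1 = 5$ (the smallest value compatible with the hypotheses of Lemmas \ref{prp0404}--\ref{prp0405}), $J = xK$ and $N_2 = x^2 K$, where $x \in \mathbb{N}$ is a large parameter to be chosen in terms of $\varepsilon$ and $K$. The sufficient condition (\ref{eq04071}) then becomes a rational inequality in $x$ alone, and the task reduces to verifying it, together with the extra $1 - 1/(2K)$ gap required by (\ref{eq04072}), for all sufficiently large $x$.

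To handle the asymptotics I would introduce $z = 1/x$ and expand around $z_0 = 0$, exactly as in the proof of Theorem \ref{prp0403}. Substituting $N_1 = 5$, $N_2 = K/z^2$, $J = K/z$ into the upper bound of Lemma \ref{prp0404} yields a rational function in $z$ with a removable singularity at $z = 0$; its value at $z = 0$ is $1 - 1/K$, because the dominant terms of the subtracted fraction are $K(4N_2 - 8)^2 \sim 16 K^3 x^4$ in the numerator and $(4K(N_1+N_2-2))^2 \sim 16 K^4 x^4$ in the denominator. Thus, for $x$ large enough, $Q_N(\mathbf{V}_{K,N_1,N_2}, H_{K,N_1,N_2}) < 1 - 1/K + O(z^2) < 1 - 1/(2K)$. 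The analogous substitution in the lower bound of Lemma \ref{prp0405} produces a rational function in $z$ with Taylor expansion $1 + O(z)$, so for $x$ large enough $Q_N(\mathbf{U}_{K,N_1,N_2,J}, H_{K,N_1,N_2}) > 1 - \varepsilon$; combined, these give (\ref{eq04072}).

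The Jaccard bound (\ref{eq04073}) is obtained by the same counting argument as in Theorem \ref{prp0403}: letting $b$ and $c$ count the node pairs co-clustered under $\mathbf{U}_{K,N_1,N_2,J}$ and $\mathbf{V}_{K,N_1,N_2}$ respectively, one has $S \leq b/c$, with $b = O(x^3)$ since there are at most $xK + 1$ clusters each of size at most $n/J$, and $c = \Omega(x^4)$ from the $K$ clusters of $N_2 = x^2 K$ nodes alone; hence $b/c = O(1/x)$ can be made smaller than $\varepsilon$. Choosing $x$ larger than the maximum of the three thresholds arising from the two modularity estimates and the Jaccard estimate then yields the required $N_1, N_2, J$.

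The only real obstacle is the routine asymptotic bookkeeping: the denominators $(N_1+N_2-2)^2$ and $(N_1+N_2-4)^2$ appearing in Lemmas \ref{prp0404} and \ref{prp0405} respectively are not identical, so one must verify that they share the same leading-order behavior $\sim (x^2K)^2$ and that the induced gap between the two modularities is indeed at least $1/(2K)$ asymptotically. Beyond this verification, which is a direct imitation of the Taylor computations in Theorem \ref{prp0403}, no new ingredient is needed.
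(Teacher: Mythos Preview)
Your proposal is correct and follows essentially the same approach as the paper's own proof: fix $N_1$, set $J=xK$, $N_2=x^2K$, substitute into the bounds from Lemmas \ref{prp0404}--\ref{prp0405}, expand in $z=1/x$ around $0$ to obtain $Q_N(\mathbf{V})<1-1/K+O(z^2)$ and $Q_N(\mathbf{U})>1-O(z)$, then reuse the Jaccard counting from Theorem \ref{prp0403}. The only cosmetic difference is that the paper takes $N_1=6$ rather than $N_1=5$, which changes nothing of substance.
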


\begin{proof}
Take any $K$. Letting $N_{1}=6$,$\ J=xK$,$\ N_{2}=x^{2}K$ we have
\begin{align*}
Q_{N}\left(  \mathbf{V}_{K,N_{1},N_{2}},H_{K,N_{1},N_{2}}\right)   &
<1-\frac{K\left(  16^{2}+\left(  4x^{2}K-8\right)  ^{2}\right)  }{\left(
4K\left(  4+x^{2}K\right)  \right)  ^{2}},\\
Q_{N}\left(  \mathbf{U}_{K,N_{1},N_{2},J},H_{K,N_{1},N_{2}}\right)   &
>1-\frac{3x}{2\left(  2+x^{2}K\right)  }-\frac{2\left(  6+x^{2}K\right)  ^{2}%
}{xK\left(  2+x^{2}K\right)  ^{2}}.
\end{align*}
Defining $z=\frac{1}{x}$ we have $x=\frac{1}{z}$ and
\begin{equation}
1-\frac{K\left(  16^{2}+\left(  4x^{2}K-8\right)  ^{2}\right)  }{\left(
4K\left(  4+x^{2}K\right)  \right)  ^{2}}=1-\frac{K\left(  16^{2}+\left(
4\left(  1/z\right)  ^{2}K-8\right)  ^{2}\right)  }{\left(  4K\left(
4+\left(  1/z\right)  ^{2}K\right)  \right)  ^{2}}. \label{eq04073a}%
\end{equation}
Similarly to the proof of Theorem \ref{prp0403}, there is a function
$f_{3}\left(  z\right)  $ which, for every $z\neq z_{0}=0$, \ is equal to the
right part of (\ref{eq04073a})\ and around $z_{0}$ has the Taylor expansion
\[
f_{3}\left(  z\right)  =1-\frac{1}{K}+r_{3}\left(  z\right)
\]
where $r_{3}\left(  z\right)  =c_{2}z^{2}+c_{3}z^{2}+...$ . Furthermore, there
exists a constant $C$ such that, for $z$ close to zero, we have%
\[
\left\vert r_{3}\left(  z\right)  \right\vert <Cz^{2}.
\]
Then, for large $x$ (and in particular for $x>\sqrt{2KC}$) we have
\begin{equation}
Q_{N}\left(  \mathbf{V}_{K,N_{1},N_{2}},H_{K,N_{1},N_{2}}\right)  <1-\frac
{1}{K}+\frac{C}{x^{2}}<1-\frac{1}{2K}. \label{eq04074}%
\end{equation}
Similarly, with $z=1/x$, we have
\begin{align}
Q_{N}\left(  \mathbf{U}_{K,N_{1},N_{2},J},H_{K,N_{1},N_{2}}\right)   &
>1-\frac{3xK}{2K\left(  2+x^{2}K\right)  }-\frac{2\left(  6+x^{2}K\right)
^{2}}{xK\left(  2+x^{2}K\right)  ^{2}}\nonumber\\
&  =1-\frac{3\left(  1/z\right)  K}{2K\left(  2+\left(  1/z\right)
^{2}K\right)  }-\frac{2\left(  6+\left(  1/z\right)  ^{2}K\right)  ^{2}%
}{\left(  1/z\right)  K\left(  2+\left(  1/z\right)  ^{2}K\right)  ^{2}}%
=f_{4}\left(  z\right)  . \label{eq04073b}%
\end{align}
Once again, there is a function $f_{4}\left(  z\right)  $ which, for every
$z\neq z_{0}=0$, \ is equal to the right part of (\ref{eq04073b})\ and around
$z_{0}$ has the Taylor expansion
\[
f_{4}\left(  z\right)  =1-\frac{7}{2K}z+r_{4}\left(  z\right)
\]
where $r_{4}\left(  z\right)  =d_{3}z^{3}+d_{4}z^{4}+...$ . And there exists a
constant $D$ such that, for $z$ close to zero, we have%
\[
\left\vert r_{4}\left(  z\right)  \right\vert <Dz^{3}<Dz^{2},\quad
r_{4}\left(  z\right)  >-Dz^{2}.
\]
Then, for large $x$ (an, in particular, for $x>2KD$) we have%
\begin{align}
Q_{N}\left(  \mathbf{U}_{K,N_{1},N_{2},J},H_{K,N_{1},N_{2}}\right)   &
>1-\frac{3xK}{2K\left(  2+x^{2}K\right)  }-\frac{2\left(  6+x^{2}K\right)
^{2}}{xK\left(  2+x^{2}K\right)  ^{2}}\nonumber\\
&  >1-\frac{7}{2Kx}-\frac{D}{x^{2}}>1-\frac{4}{Kx}. \label{eq04075}%
\end{align}
For any $\varepsilon\in\left(  0,\frac{1}{2K}\right)  $ choose any
$x>\max\left(  \frac{4}{K\varepsilon},\sqrt{2KC},2KD\right)  $; then we have
$\frac{1}{2K}>\varepsilon>\frac{4}{Kx}$ which, combined with (\ref{eq04074})
and (\ref{eq04075}), yields%
\[
Q_{N}\left(  \mathbf{U}_{K,N_{1},N_{2},J},G_{K,N_{1},N_{2}}\right)
>1-\frac{4}{Kx}>1-\varepsilon>1-\frac{1}{2K}>Q_{N}\left(  \mathbf{V}%
_{K,N_{1},N_{2}},G_{K,N_{1},N_{2}}\right)  .
\]
In short, we can satisfy (\ref{eq04072}) for every $K\in\mathbb{N}$ and every
$\varepsilon\in\left(  0,\frac{1}{2K}\right)  $,\ by taking $x$
\textquotedblleft sufficiently large\textquotedblright\ and $N_{1}=6$%
,$\ J=xK$,$\ N_{2}=x^{2}K$.

Finally, (\ref{eq04073}) is exactly the same as (\ref{eq04023}) and has
already been proved.
\end{proof}

Similarly to Section \ref{sec0401}, the bounds utilized in Lemmas
\ref{prp0404}-\ref{prp0405} and Theorem \ref{prp0406} are conservative and the
inequality
\begin{equation}
Q_{N}\left(  \mathbf{V}_{K,N_{1},N_{2}},H_{K,N_{1},N_{2}}\right)
<Q_{N}\left(  \mathbf{U}_{K,N_{1},N_{2},J},H_{K,N_{1},N_{2}}\right)
\label{eq04078}%
\end{equation}
can be satisfied even when the bounds are violated. This can be seen in Table
2, which is analogous to Table 1 of Section \ref{sec0401}. We have used $K=3$,
$N_{1}=6$ and several $x$ values. The first six entries of each column list
the quantities used in the proof of Theorem \ref{prp0403} and, for
\textquotedblleft sufficiently large\textquotedblright\ $x$, should form an
increasing sequence. This is the case for $x=8$ and $x=10$; for $x=6$ the
sequence is not increasing but (\ref{eq04078}) holds.

\begin{center}%
\begin{tabular}
[c]{|l|l|l|l|l|}\hline
Row no. &  & $x=6$ & $x=8$ & $x=10$\\\hline
1 & $Q_{N}\left(  \mathbf{V}_{K,N_{1},N_{2}},G_{K,N_{1},N_{2}}\right)  $ &
0.6872 & 0.6787 & 0.6745\\\hline
2 & $1-\frac{K\left(  \left(  4N_{1}-8\right)  ^{2}+\left(  4N_{2}-8\right)
^{2}\right)  }{\left(  4K\left(  N_{1}+N_{2}-2\right)  \right)  ^{2}}$ &
0.7010 & 0.6866 & 0.6796\\\hline
3 & $1-\frac{1}{2K}$ & 0.8333 & 0.8333 & 0.8333\\\hline
4 & $1-\frac{4}{Kx}$ & 0.7778 & 0.8333 & 0.8667\\\hline
5 & $1-\frac{3}{2K\left(  N_{1}+N_{2}-4\right)  }J-\frac{2\left(  N_{1}%
+N_{2}\right)  ^{2}}{\left(  N_{1}+N_{2}-4\right)  ^{2}}J^{-1}$ & 0.7988 &
0.8513 & 0.8819\\\hline
6 & $Q_{N}\left(  \mathbf{U}_{K,N_{1},N_{2},J},G_{K,N_{1},N_{2}}\right)  $ &
0.8743 & 0.8986 & 0.9182\\\hline
7 & $S\left(  \mathbf{U}_{K,N_{1},N_{2},J},\mathbf{V}_{K,N_{1},N_{2}}\right)
$ & 0.1695 & 0.1183 & 0.0956\\\hline
\end{tabular}

\textbf{Table 2. }Several quantities appearing in the proof of Theorem
\textbf{ }\ref{prp0406}. In each column and for rows 1 to 6, for large enough
$x$, the value of each row must be no less than that of the previous one.
\end{center}

From rows 1 and 6 of Table 2 we see that $Q_{N}\left(  \mathbf{V}%
_{K,N_{1},N_{2}},G_{K,N_{1},N_{2}}\right)  $ is decreasing with $x$ and
$Q_{N}\left(  \mathbf{U}_{K,N_{1},N_{2},J},G_{K,N_{1},N_{2}}\right)  $ is
increasing; from row 7 we see that the Jaccard similarity is decreasing with
$x$.

\section{Discussion and Related Work\label{sec05}}

Theorems \ref{prp0403} and \ref{prp0406} cast doubt on the efficacy of
Newman's modularity $Q_{N}$ as \textquotedblleft an objective metric for
choosing the number of communities\textquotedblright\ \cite{newman2004finding}%
. In fact, the existence of such an objective metric can be doubted and the
meaning of the terms \textquotedblleft best clustering\textquotedblright,
\textquotedblleft natural clustering\textquotedblright, etc. are rather
ambiguous, as noted by several researchers (for a discussion see \cite[Section
III]{fortunato2010community}).

Consider for example the term \textquotedblleft good
clustering\textquotedblright. A good clustering $\mathbf{V}$ of a graph $G$
should be objectively recognizable by a high value of $Q\left(  \mathbf{V}%
,G\right)  $, where $Q$ is a \textquotedblleft good\ CQF \textquotedblright.
However, a good CQF is one which assigns high scores to good clusterings.
Hence it appears that the definition of \textquotedblleft good
clusterings\textquotedblright\ and \textquotedblleft good
CQF\textquotedblright\ is a circular process.

While obtaining a \textquotedblleft good CQF\textquotedblright\ is a
worthwhile target, the main focus of the current paper has been the use of
$Q_{N}$ towards estimation of the true number of communities. Since we have
argued that $Q_{N}$'s performance is less than ideal, let us conclude by
discussing alternative ways to perform community number selection.

Let us start by stating that we consider \textquotedblleft community number
selection\textquotedblright\ to be a special case of the general problem of
\textquotedblleft cluster number selection\textquotedblright, which has been
exhaustively studied in the \textquotedblleft classic\textquotedblright%
\ clustering literature (see for example the book \cite{duda1995pattern}). In
this literature, cluster number selection has been recognized as
\textquotedblleft a fundamental, and largely unsolved, problem in cluster
analysis\textquotedblright\cite{sugar2003finding}.

Many works in the the \textquotedblleft classic\textquotedblright\ clustering
literature treat cluster number selection through a two-stage approach. First,
a CQF is used to obtain the optimal clustering of size $K$, for $K\in\left\{
1,2,...,K_{\max}\right\}  $. Then the optimal $K$ value (and hence the overall
optimal clustering)\ is obtained using a \emph{cluster number selection
criterion} \ (CNSC)\footnote{The terms \emph{model order}\ selection criterion
and \emph{cluster validity} selection criterion are also used.\ } such as the
Akaike Information Criterion \cite{akaike1974new}, the Bayesian Information
Criterion \cite{schwarz1978estimating}, Minimum Description Length
\cite{rissanen1978modeling}, the gap statistic \cite{tibshirani2001estimating}%
, the knee criterion \cite{salvador2004determining} etc. Details on CNSC can
be found in
\cite{boutin2004cluster,dimitriadou2002examination,halkidi2002cluster,halkidi2002clustering,milligan1985examination}%
.

The two-stage approach has also been used in the community detection
literature. In Newman's seminal paper \cite{newman2004finding} a dendrogram is
obtained through a divisive hierarchical clustering process (which uses
\emph{betweenness}, rather than $Q_{N}$) and the dendrogram cutoff level (and
hence the number of communities)\ is obtained by maximizaton of $Q_{N}$. In
this case $Q_{N}$ is used as CNSC rather than as a CQF.

We find the two-stage approach to community detection promising and we believe
it deserves further research. In particular, we expect that the two-stage
approach will yield better results if a better CNSC\ than $Q_{N}$ is used.
However the specification of good CNSC's\ must overcome the same difficulties
(associated with circularity) mentioned in the beginning of this section. We
hope these difficulties can be alleviated by the use of an \emph{axiomatic
}approach, which we will report in a future publication.

\bibliographystyle{amsplain}
\providecommand{\bysame}{\leavevmode\hbox to3em{\hrulefill}\thinspace}
\providecommand{\MR}{\relax\ifhmode\unskip\space\fi MR }
\providecommand{\MRhref}[2]{%
  \href{http://www.ams.org/mathscinet-getitem?mr=#1}{#2}
}
\providecommand{\href}[2]{#2}

\end{document}